\newtheorem{theorem}{Theorem}
\newcommand{\myparatight}[1]{\smallskip\noindent{\bf {#1}:}~}
\newcommand{\argmax}{\operatornamewithlimits{argmax}}
\newcommand{\argmin}{\operatornamewithlimits{argmin}}
\begin{document}

\AddToShipoutPictureBG*{%
  \AtPageUpperLeft{%
    \setlength\unitlength{1in}%
    \hspace*{\dimexpr0.5\paperwidth\relax}
    \makebox(0,-0.75)[c]{In AAAI Conference on Artificial Intelligence, 2022.}%
}}

%
\title{\LARGE{\bf{Certified Robustness of Nearest Neighbors against \\Data Poisoning and Backdoor Attacks}}}
\author{Jinyuan Jia, Yupei Liu, Xiaoyu Cao, Neil Zhenqiang Gong \\
{Duke University} \\
{\{jinyuan.jia, yupei.liu, xiaoyu.cao, neil.gong\}@duke.edu}
} 

\date{}
\maketitle
\begin{abstract}
Data poisoning attacks and backdoor attacks aim to corrupt a machine learning classifier via modifying, adding, and/or removing some carefully selected training examples, such that the corrupted classifier makes incorrect predictions as the attacker desires. The key idea of state-of-the-art certified defenses against data poisoning attacks and backdoor attacks is to create a \emph{majority vote} mechanism to predict the label of a testing example. Moreover, each voter is a base classifier trained on a subset of the training dataset. Classical simple learning algorithms such as $k$ nearest neighbors (kNN) and radius nearest neighbors (rNN) have intrinsic majority vote mechanisms. In this work, we show that the intrinsic majority vote mechanisms in kNN and rNN already provide certified robustness guarantees against data poisoning attacks and backdoor attacks. Moreover, our evaluation results on MNIST and CIFAR10 show that the intrinsic certified robustness guarantees of kNN and rNN outperform those provided by state-of-the-art certified defenses. Our results serve as standard baselines for future certified defenses against data poisoning attacks and backdoor attacks. 
\end{abstract}
\section{Introduction}
Data poisoning attacks and backdoor attacks~\citep{barreno2006can,nelson2008exploiting,biggio2012poisoning,biggio2013security,xiao2015support,steinhardt2017certified,gu2017badnets,chen2017targeted,liu2017trojaning,shafahi2018poison} aim to corrupt the training phase of a machine learning system via carefully poisoning its training dataset including modifying, adding, and/or removing some training examples. Specifically, in data poisoning attacks, the corrupted downstream classifier makes incorrect predictions for clean testing inputs; and in backdoor attacks, the corrupted downstream classifier makes incorrect predictions for testing inputs embedded with a certain trigger.
Data poisoning attacks and backdoor attacks pose severe security concerns to machine learning in critical application domains such as autonomous driving~\citep{gu2017badnets}, cybersecurity~\citep{rubinstein2009antidote,suciu2018does,chen2017targeted}, and healthcare analytics~\citep{mozaffari2014systematic}. 


Multiple certifiably robust learning algorithms~\citep{ma2019data,rosenfeld2020certified,levine2020deep,jia2020intrinsic} against data poisoning attacks  and backdoor attacks were recently developed.  A learning algorithm is certifiably robust against data poisoning attacks and backdoor attacks if it can  learn a classifier on a training dataset that achieves a \emph{certified accuracy} on a testing dataset when the number of poisoned training examples is no more than a threshold (called \emph{poisoning size}).  
The certified accuracy of a learning algorithm is a lower bound of the accuracy of its learnt classifier no matter how an attacker poisons the training examples with the given poisoning size.

The key idea of state-of-the-art certifiably robust learning algorithms~\citep{jia2020intrinsic,levine2020deep} is to create a \emph{majority vote} mechanism to predict the label of a testing example. In particular, each voter votes a label for a testing example and the final predicted label is the majority vote among multiple voters.   
 For instance, Bagging~\citep{jia2020intrinsic} learns multiple base classifiers (i.e., voters), where each of them is learnt on a random subsample of the training dataset.  
 Deep Partition Aggregation (DPA)~\citep{levine2020deep} divides the training dataset into disjoint partitions and learns a base classifier (i.e., a voter) on each partition. 
We denote by $a$ and $b$ the labels with the largest and second largest number of votes, respectively. Moreover, $s_a$ and $s_b$ respectively are the number of votes for labels $a$ and $b$ when there are no corrupted voters. 
 The corrupted voters change their votes from $a$ to $b$ in the worst-case scenario. Therefore, the majority vote result (i.e., the predicted label for a testing example) remains to be $a$ when  the number of corrupted voters is no larger than $\lceil \frac{s_a - s_b}{2} \rceil - 1$.  In other words, the number of corrupted voters that a majority vote mechanism can tolerate depends on the gap $s_a - s_b$ between the largest and the second largest number of votes.

 However, state-of-the-art  certifiably robust learning algorithms achieve suboptimal certified accuracies due to two key limitations. First, 
\emph{each poisoned training example leads to multiple corrupted voters} in the worst-case scenarios. In particular,  modifying a training example corrupts the voters whose training subsamples include the modified training example in bagging~\citep{jia2020intrinsic} and corrupts two voters (i.e., two base classifiers) in DPA~\citep{levine2020deep}.  Therefore, given the same gap $s_a - s_b$ between the largest and the second largest number of votes, the majority vote result is robust against a small number of poisoned training examples. 
Second, \emph{they can only certify robustness for each testing example individually} because it is hard to quantify how poisoned training examples corrupt the voters for different testing examples jointly.  Suppose the classifier learnt by a learning algorithm can correctly classify testing inputs $\mathbf{x}_1$ and $\mathbf{x}_2$. An attacker can poison $e$ training examples such that the learnt classifier misclassifies $\mathbf{x}_1$ or $\mathbf{x}_2$, but the attacker cannot poison $e$ training examples such that both $\mathbf{x}_1$ and $\mathbf{x}_2$ are misclassified. When the poisoning size is $e$, existing certifiably robust learning algorithms would produce a certified accuracy of 0 for the two testing examples. However, the certified accuracy can be 1/2 if we consider them jointly. We note that \citet{steinhardt2017certified} derives an approximate upper bound of the loss function under data poisoning attacks. However, their method cannot certify  the learnt model predicts the same label for a testing example.    

\begin{figure}[!t]
	 \centering
\includegraphics[width=0.8\textwidth]{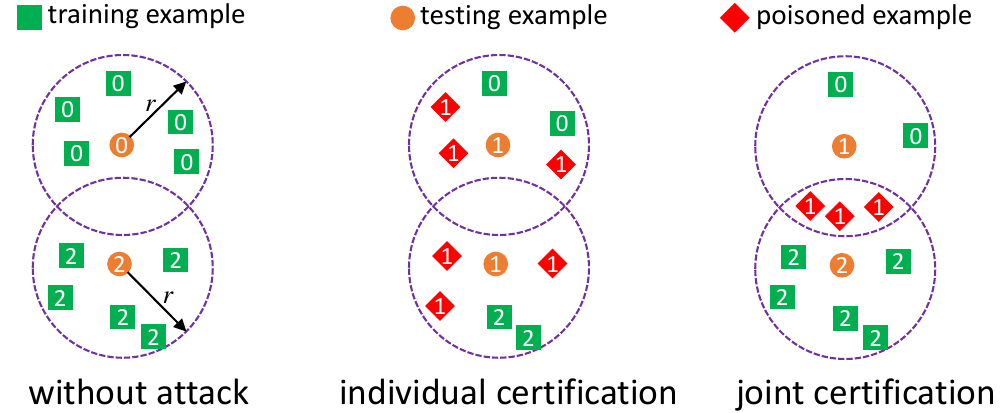}
	 \caption{An example to illustrate individual certification vs. joint certification. Suppose rNN  correctly classifies the two testing examples without attack.  An attacker can poison $3$ training examples. The attacker can make rNN misclassify each testing example individually. However, the attacker cannot make rNN misclassify both testing examples jointly.}
	 \label{illustration_of_joint}
	 \vspace{-3mm}
\end{figure}

 $k$ nearest neighbors (kNN) and radius nearest neighbors (rNN)~\citep{fix1951discriminatory,cover1967nearest} are well-known classic learning algorithms. With good feature representation (e.g., those learnt via self-supervised learning), kNN and rNN can achieve classification accuracy comparable to those of complex learning algorithms such as neural networks~\citep{he2020momentum}. kNN and rNN have intrinsic majority vote mechanisms.
 Specifically, given a testing example, kNN (or rNN) predicts its label via taking a majority vote among the labels of its $k$ nearest neighbors (or neighbors within radius $r$) in the training dataset. Our major contribution in this work is that we show the intrinsic majority vote mechanisms in kNN and rNN make them certifiably robust against data poisoning attacks. Moreover, kNN and rNN address the limitations of state-of-the-art certifiably robust learning algorithms. Specifically, each poisoned training example leads to only one corrupted voter in the  worst-case scenario in kNN and rNN. Thus, given the same gap $s_a - s_b$, the majority vote result (i.e., predicted label for a testing example) is robust against more poisoned training examples in kNN and rNN. 
 
 Furthermore, we show that rNN enables joint certification of multiple testing examples. Figure~\ref{illustration_of_joint} illustrates an example of individual certification and joint certification with two testing examples in rNN. When we treat the two testing examples individually, an attacker can poison 3 training examples such that rNN misclassifies each of them. However,  when we treat them jointly, an attacker cannot poison 3 training examples to misclassify both of them. We propose such joint certification to derive a better certified accuracy for rNN. Specifically, we design methods to group testing examples in a testing dataset such that we can perform joint certification for each group of testing examples. 
  
  We evaluate our methods on MNIST and CIFAR10 datasets. We use $\ell_1$ distance metric to calculate nearest neighbors. First, our methods substantially outperform state-of-the-art certifiably robust learning algorithms. For instance, in data poisoning attacks,  when an attacker can arbitrarily poison 1,000 training examples on MNIST, the certified accuracy of rNN with $r=4$ is 22.9\% and 40.8\% higher than those of bagging~\citep{jia2020intrinsic} and DPA~\citep{levine2020deep}, respectively. Second, our joint certification improves certified accuracy. 
For instance,  our joint certification improves the certified accuracy of rNN by 15.1\%  when an attacker can arbitrarily poison 1,000 training examples on MNIST for data poisoning attacks. Third, we show that self-supervised learning can improve the certified accuracy of kNN and rNN. For instance, when an attacker can arbitrarily poison 500 training examples on CIFAR10 in data poisoning attacks, the certified accuracy of kNN improves by 43.0\% if we use CLIP~\citep{radford2learning}, a feature extractor pre-trained via self-supervised learning, to extract features for each training or testing input.

In summary, we make the following contributions: 
\begin{itemize}
    \item We derive the intrinsic certified robustness guarantees of kNN and rNN against data poisoning attacks and backdoor attacks. 
    \item We propose joint certification of multiple testing examples to derive a better certified robustness guarantee for rNN. rNN is the first method that supports joint certification of multiple testing examples. 
    \item We evaluate our methods and compare them with state-of-the-art on  MNIST and CIFAR10. 
\end{itemize}
\section{Problem Setup}
\vspace{-2mm}
\myparatight{Learning setting} 
Assuming we have a training dataset $D_{tr}$ with $n$ training examples. 
We denote by $\mathcal{M}$ a learning algorithm. Moreover, we denote by $\mathcal{M}(D_{tr}, \mathbf{x})$ the label predicted for a testing input $\mathbf{x}$ by a classifier learnt by $\mathcal{M}$ on the training dataset $D_{tr}$. 
For instance,  
given a training dataset $D_{tr}$ and a testing input $\mathbf{x}$, kNN finds the $k$ training examples in $D_{tr}$ that are the closest to $\mathbf{x}$ as the nearest neighbors, while rNN finds the training examples  in $D_{tr}$ whose distances to  $\mathbf{x}$ are no larger than $r$ as the nearest neighbors. The distance between a training input and a testing input can be measured by any distance metric. 
Then, kNN and rNN use majority vote among the nearest neighbors to predict the label of $\mathbf{x}$. Specifically, each nearest neighbor is a voter and votes its label for the testing input $\mathbf{x}$; and the label with the largest number of votes is the final predicted label for $\mathbf{x}$.

\myparatight{Data poisoning attacks} 
We consider  data poisoning attacks~\citep{rubinstein2009antidote,biggio2012poisoning,xiao2015feature,li2016data,munoz2017towards,jagielski2018manipulating} that aim to 
 poison (i.e., modify, add, and/or remove) some carefully selected training examples in $D_{tr}$ such that the corrupted   classifier has a low accuracy for testing inputs (either indiscriminate clean testing inputs or attacker-chosen ones).

 \myparatight{Backdoor attacks} In backdoor attacks~\citep{gu2017badnets,liu2017trojaning,chen2017targeted}, an attacker also poisons the training dataset, but the corrupted classifier makes incorrect, attacker-chosen predictions for testing inputs embedded with a certain trigger. For instance, the attacker can embed the trigger to some training inputs in $D_{tr}$ and relabel them as the attacker-chosen label. The classifier built based on such poisoned training dataset predicts the attacker-chosen label for any testing input embedded with the same trigger. However, the predictions for clean testing inputs without the trigger are unaffected, i.e., the corrupted classifier and the clean classifier are highly likely to predict the same label for a clean testing input.

\myparatight{Poisoned training dataset} Both data poisoning attacks and backdoor attacks poison the training dataset to achieve their goals. For simplicity, we use $D^{*}_{tr}$ to denote the \emph{poisoned training dataset}. Note that $D^{*}_{tr}$ could include duplicate training examples, e.g., when the attacker adds duplicate training examples.  Moreover, we define the \emph{poisoning size} of  a poisoned training dataset $D^{*}_{tr}$ (denoted as $S(D_{tr}, D^{*}_{tr})$) as the minimal number of modified/added/removed training examples that can turn $D_{tr}$ into $D_{tr}^{*}$. Formally,  $S(D_{tr}, D^{*}_{tr})=\max\{|D_{tr}^{*}|,|D_{tr}|\}-|D_{tr}^{*}\cap D_{tr}|$ is the poisoning size of $D_{tr}^{*}$.

\myparatight{Certified accuracy}
Given a training dataset $D_{tr}$ and a learning algorithm $\mathcal{M}$,  we use \emph{certified accuracy} on a testing dataset $D_{te}=\{(\mathbf{x}_i, y_i)\}_{i=1}^{t}$ to measure the algorithm's performance. Specifically, 
we denote \emph{certified accuracy at poisoning size $e$} as $CA(e)$ and formally define it as follows:
{\small 
\begin{align}
 CA(e)=  \min_{D^{*}_{tr}, S(D_{tr}, D^{*}_{tr}) \leq e} \frac{\sum_{(\mathbf{x}_i,y_i)\in D_{te}}\mathbb{I}(\mathcal{M}(D^{*}_{tr},\mathbf{x}_i)= y_{i})}{|D_{te}|},  
\end{align}
}
where  $\mathbb{I}$ is the indicator function and $\mathcal{M}(D_{tr}^{*}, \mathbf{x}_i)$ is the label predicted for a testing input $\mathbf{x}_i$ by the classifier learnt by the algorithm $\mathcal{M}$ on the poisoned training dataset $D_{tr}^{*}$. $CA(e)$ is the least testing accuracy on $D_{te}$ that the learning algorithm  $\mathcal{M}$ can achieve no matter how an attacker poisons the training examples when the poisoning size is at most $e$.  For data poisoning attacks, the testing dataset $D_{te}$ is a set of clean testing examples. For backdoor attacks, $D_{te}$ is a set of testing examples embedded with a trigger. 
Our goal is to derive  lower bounds of $CA(e)$ for learning algorithms kNN and rNN.

\section{Certified Accuracy of kNN and rNN}
We first derive a lower bound of the certified accuracy via \emph{individual certification}, which treats testing examples in $D_{te}$ individually. Then, we derive a better lower bound of the certified accuracy for rNN via \emph{joint certification}, which treats testing examples jointly.  

\subsection{Individual Certification}
Given a poisoning size at most $e$, our idea is to certify whether the predicted label stays unchanged or not for each testing input individually. If the predicted label of a testing input $\mathbf{x}$ stays unchanged (i.e., $\mathcal{M}(D_{tr}, \mathbf{x})=\mathcal{M}(D_{tr}^*, \mathbf{x})$) and it matches with the testing input's true label, then kNN or rNN certifiably correctly classifies the testing input when the poisoning size is at most $e$. Therefore, we can obtain a lower bound of the certified accuracy at poisoning size $e$ as the fraction of testing inputs in $D_{te}$  which kNN or rNN certifiably correctly classifies. Next, we first discuss how to certify whether the predicted label stays unchanged or not for each testing input individually. Then, we show our lower bound of the certified accuracy at poisoning size $e$.

\myparatight{Certifying the predicted label of a testing input} Our goal is to certify that $\mathcal{M}(D_{tr}, \mathbf{x})=\mathcal{M}(D_{tr}^*, \mathbf{x})$ for a testing input $\mathbf{x}$ when the poisoning size is no larger than a threshold. 
 Given a training dataset $D_{tr}$ (or a poisoned training dataset $D_{tr}^*$) and a testing input $\mathbf{x}$, we use $\mathcal{N}(D_{tr},\mathbf{x})$ (or $\mathcal{N}(D_{tr}^*,\mathbf{x})$) to denote the set of nearest neighbors of $\mathbf{x}$ in $D_{tr}$ (or $D_{tr}^*$) for kNN or rNN.  
We note that there may exist ties when determining the nearest neighbors for kNN, 
i.e., multiple training examples may have the same distance to the testing input. Usually, kNN breaks such ties uniformly at random. However, such random ties breaking method introduces randomness, i.e., the difference of nearest neighbors before and after poisoned training examples (i.e., $\mathcal{N}(D_{tr},\mathbf{x})$ vs. $\mathcal{N}(D_{tr}^*,\mathbf{x})$) depends on the randomness in breaking ties. 
Such randomness makes it challenging to certify the robustness of the predicted label against poisoned training examples. To address the challenge, we propose to define a deterministic ranking of training examples and break ties via choosing the training examples with larger ranks. Moreover, such ranking between clean training examples does not depend on poisoned ones. For instance, we can use a cryptographic hash function (e.g., SHA-1) that is very unlikely to have collisions to hash each training example based on its input feature vector and label, and then we rank the training examples based on their hash values.

We use $s_l$ to denote the number of votes in $\mathcal{N}(D_{tr},\mathbf{x})$ for label $l$, i.e., the number of nearest neighbors in $\mathcal{N}(D_{tr},\mathbf{x})$ whose labels are $l$. Formally, we have $s_l = \sum_{(\mathbf{x}_j,y_j) \in \mathcal{N}(D_{tr},\mathbf{x})} \mathbb{I}(y_j = l)$,  
where $l=1,2,\cdots,c$ and $\mathbb{I}$ is an indicator function. We note that $s_l$ also depends on the testing input $\mathbf{x}$. However, we omit the explicit dependency on $\mathbf{x}$ for simplicity.  kNN or rNN essentially predicts the label of the testing input $\mathbf{x}$ as the label with the largest number of votes, i.e., $\mathcal{M}(D_{tr}, \mathbf{x})=\argmax_{l\in \{1,2,\cdots,c\}}s_l$. 
Suppose $a$ and $b$ are the labels with the largest and second largest number of votes, i.e., $s_a$ and $s_b$ are the largest and second largest ones among $\{s_1,s_2,\cdots,s_c\}$, respectively. We note that there may exist ties when comparing the labels based on their votes. We define a deterministic ranking of labels in $\{1,2,\cdots,c\}$ and take the label with the largest rank when such ties happen. For instance, when labels 2 and 3 have tied largest number of votes, we take label 3 as $a$. In the worse-case scenario, each poisoned training example leads to one corrupted voter in kNN or rNN, which changes its vote from label $a$ to label $b$. Therefore, kNN or rNN still predicts label $a$ for the testing input $\mathbf{x}$ when the number of poisoned training examples is no more than  $\lceil \frac{s_a - s_b}{2} \rceil - 1$ (without considering the ties breaking). Formally, we have the following theorem:
\begin{theorem}
\label{nn_certified_theorem}
Assuming we have a training dataset $D_{tr}$, a testing input $\mathbf{x}$, and a nearest neighbor algorithm $\mathcal{M}$ (i.e., kNN or rNN). $a$ and $b$ respectively are the two labels with the largest and second largest number of votes among the nearest neighbors $\mathcal{N}(D_{tr},\mathbf{x})$ of $\mathbf{x}$ in $D_{tr}$. Moreover, $s_a$ and $s_b$ are the number of votes for $a$ and $b$, respectively. Then, we have the following: 
\begin{align}
&\mathcal{M}(D_{tr}^*, \mathbf{x}) = a, \nonumber \\
&\forall D^{*}_{tr}  \text{ such that } S(D_{tr}, D_{tr}^*) \leq \lceil \frac{s_a - s_b + \mathbb{I}(a>b)}{2} \rceil - 1. 
\end{align}
\end{theorem}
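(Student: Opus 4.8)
The plan is to reduce the claim to a bound on how much any label's vote count can change under poisoning, and then finish with an arithmetic case analysis. Write $s_l'$ for the number of votes for label $l$ among $\mathcal{N}(D_{tr}^*,\mathbf{x})$, so that $\mathcal{M}(D_{tr}^*,\mathbf{x})$ is the largest-ranked label attaining $\max_l s_l'$. The central step I would establish first is a ``one poisoned example, one corrupted voter'' lemma: if $S(D_{tr},D_{tr}^*)\le e$, then at most $e$ elements of $\mathcal{N}(D_{tr},\mathbf{x})$ are missing from $\mathcal{N}(D_{tr}^*,\mathbf{x})$, and at most $e$ elements of $\mathcal{N}(D_{tr}^*,\mathbf{x})$ are missing from $\mathcal{N}(D_{tr},\mathbf{x})$. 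Granting this, intersecting with the common neighbor set $\mathcal{N}(D_{tr},\mathbf{x})\cap\mathcal{N}(D_{tr}^*,\mathbf{x})$ and counting labels yields $s_l-e\le s_l'\le s_l+e$ for every label $l$.

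For rNN the lemma is essentially immediate: $\mathcal{N}(D,\mathbf{x})$ is the set of examples in $D$ within distance $r$ of $\mathbf{x}$, which depends only on which examples are present, so $\mathcal{N}(D_{tr},\mathbf{x})\setminus\mathcal{N}(D_{tr}^*,\mathbf{x})\subseteq D_{tr}\setminus D_{tr}^*$ and $\mathcal{N}(D_{tr}^*,\mathbf{x})\setminus\mathcal{N}(D_{tr},\mathbf{x})\subseteq D_{tr}^*\setminus D_{tr}$, each of size at most $e$ by the definition of $S(\cdot,\cdot)$. For kNN I would decompose the transformation from $D_{tr}$ to $D_{tr}^*$ into a chain of at most $e$ atomic steps, each of which adds one example, deletes one example, or replaces one example by another (pairing deleted with added examples, so that the number of steps equals $S(D_{tr},D_{tr}^*)\le e$), and then verify that one atomic step removes at most one example from, and inserts at most one example into, the ``top-$k$'' prefix of the list of training examples sorted by distance to $\mathbf{x}$. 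The deterministic ranking used to break distance ties is precisely what makes this sorted order well defined and insensitive to which examples are poisoned, so that deleting or inserting a single element shifts the length-$k$ prefix by exactly one slot; summing the changes along the chain gives the claimed bound. The degenerate case where fewer than $k$ examples remain is handled separately, since $\mathcal{N}$ is then the whole (small) dataset.

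To conclude, fix any $l\ne a$. Combining $s_a'\ge s_a-e$ with $s_l'\le s_l+e\le s_b+e$ gives $s_a'-s_l'\ge (s_a-s_b)-2e$. The hypothesis $e\le\lceil\frac{s_a-s_b+\mathbb{I}(a>b)}{2}\rceil-1$ implies $2e\le s_a-s_b+\mathbb{I}(a>b)-1$, hence $s_a'-s_l'\ge 1-\mathbb{I}(a>b)$ for every $l\ne a$. If $a<b$ this is strictly positive, so $s_a'>s_l'$ for all $l\ne a$ and $\mathcal{M}(D_{tr}^*,\mathbf{x})=a$. If $a>b$, the bound only gives $s_a'\ge s_l'$, so $a$ attains the maximum but possibly not uniquely; any $l\ne a$ with $s_l'=s_a'$ then satisfies $s_l\ge s_a-2e\ge s_b$ (using $2e\le s_a-s_b$), hence $s_l=s_b$ because $s_b$ is the second largest vote count in $D_{tr}$. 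Since the deterministic label ranking designates $b$ as the largest-ranked label with $s_b$ votes other than $a$, every such $l$ ranks no higher than $b$, which in turn ranks below $a$ because $a>b$; thus $a$ is the largest-ranked maximizer and $\mathcal{M}(D_{tr}^*,\mathbf{x})=a$ again.

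I expect the kNN half of the ``one poisoned example, one corrupted voter'' lemma to be the main obstacle: the delicate point is verifying that a single replacement cannot displace two distinct neighbors at once, which is exactly what the deterministic ranking prevents. The rNN case, the vote-count bounds, and the closing arithmetic are routine, and handling multisets of poisoned examples is a minor bookkeeping matter that folds into the atomic-step decomposition.
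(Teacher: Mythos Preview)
Your argument is correct and follows the same skeleton as the paper's proof: bound $|s_l'-s_l|\le S(D_{tr},D_{tr}^*)$ for every label $l$, then use $2e\le s_a-s_b+\mathbb{I}(a>b)-1$ to force $a$ to remain the (tie-broken) maximizer. The paper states the neighbor-change bound and the tie-breaking conclusion without justification, whereas you supply both---the chain decomposition for kNN and the explicit ``any tying $l$ must have $s_l=s_b$, hence rank $\le b<a$'' step---so your write-up is strictly more complete but not a different route.
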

\begin{proof}
See Supplementary Material.
\end{proof}

\myparatight{Deriving a lower bound of $CA(e)$} kNN or rNN certifiably correctly classifies a testing input $\mathbf{x}$ if it correctly predicts its label before attacks and the predicted label stays unchanged after an attacker poisons the training dataset. Therefore, the fraction of testing inputs that kNN or rNN certifiably correctly classifies is a lower bound of $CA(e)$. Formally, we have the following theorem:  

\begin{theorem}[Individual Certification]
\label{theorem_of_ic_aggregate}
Assuming we have a training dataset $D_{tr}$, a testing dataset  $D_{te}=\{(\mathbf{x}_i,y_i)\}_{i=1}^{t}$, and a nearest neighbor algorithm $\mathcal{M}$ (i.e., kNN or rNN).  $a_i$ and $b_i$ respectively are the two labels with the largest and second largest number of votes among the nearest neighbors $\mathcal{N}(D_{tr},\mathbf{x}_i)$ of $\mathbf{x}_i$ in $D_{tr}$. Moreover, $s_{a_i}$  and $s_{b_i}$ are the number of votes for $a_i$ and $b_i$, respectively. Then, we have the following lower bound of $CA(e)$: 
\begin{align}
 {CA}(e) \geq \frac{\sum_{(\mathbf{x}_i,y_i)\in D_{te}} \mathbb{I}(a_i = y_i)\cdot \mathbb{I}(e \leq e_i^{*})}{|D_{te}|}, 
\end{align}
where  $e_{i}^{*}= \lceil \frac{s_{a_i} - s_{b_i} + \mathbb{I}(a_i > b_i)}{2} \rceil - 1$. 
\end{theorem}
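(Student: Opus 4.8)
The plan is to obtain this bound as an immediate consequence of Theorem~\ref{nn_certified_theorem} applied to each testing example separately. First I would unpack the definition of $CA(e)$: it is the minimum, over all poisoned training datasets $D_{tr}^*$ with $S(D_{tr}, D_{tr}^*) \leq e$, of the fraction of testing examples in $D_{te}$ that the classifier learnt on $D_{tr}^*$ classifies correctly. Hence it suffices to show that for \emph{every} such $D_{tr}^*$, the number of correctly classified testing examples is at least $\sum_{(\mathbf{x}_i,y_i)\in D_{te}} \mathbb{I}(a_i = y_i)\cdot \mathbb{I}(e \leq e_i^*)$; dividing by $|D_{te}|$ and taking the minimum over $D_{tr}^*$ then gives the claim.

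Next I would fix an arbitrary poisoned dataset $D_{tr}^*$ with $S(D_{tr}, D_{tr}^*) \leq e$ and consider any testing example $(\mathbf{x}_i, y_i)$ for which both $a_i = y_i$ and $e \leq e_i^*$ hold. Combining $S(D_{tr}, D_{tr}^*) \leq e$ with $e \leq e_i^* = \lceil \frac{s_{a_i} - s_{b_i} + \mathbb{I}(a_i > b_i)}{2} \rceil - 1$ yields $S(D_{tr}, D_{tr}^*) \leq \lceil \frac{s_{a_i} - s_{b_i} + \mathbb{I}(a_i > b_i)}{2} \rceil - 1$, which is exactly the hypothesis of Theorem~\ref{nn_certified_theorem} instantiated at $\mathbf{x}_i$. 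Therefore $\mathcal{M}(D_{tr}^*, \mathbf{x}_i) = a_i = y_i$, i.e., $\mathbf{x}_i$ is correctly classified under $D_{tr}^*$. Summing this over all testing examples, the count of correctly classified testing examples under $D_{tr}^*$ is at least $\sum_{(\mathbf{x}_i,y_i)\in D_{te}} \mathbb{I}(a_i = y_i)\cdot \mathbb{I}(e \leq e_i^*)$.

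Finally, since this inequality holds for every valid $D_{tr}^*$, it holds for the minimizing one, so dividing both sides by $|D_{te}|$ gives the stated lower bound on $CA(e)$. There is no genuine obstacle here — the proof is a one-line reduction to Theorem~\ref{nn_certified_theorem} — and the only point deserving care is the observation that $a_i$, $b_i$, $s_{a_i}$, $s_{b_i}$, and hence $e_i^*$, are computed on the clean dataset $D_{tr}$ and so do not depend on the attacker's choice of $D_{tr}^*$; this independence is precisely what allows the right-hand-side summand to be treated as a fixed constant and pulled outside the minimum over $D_{tr}^*$.
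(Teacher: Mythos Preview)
Your proposal is correct and follows essentially the same approach as the paper: both arguments reduce the bound to a per-example application of Theorem~\ref{nn_certified_theorem}. The paper phrases this via the inequality $\min_{D_{tr}^*}\sum_i(\cdot)\geq\sum_i\min_{D_{tr}^*}(\cdot)$ and then invokes Theorem~\ref{nn_certified_theorem} termwise, which is exactly your ``fix an arbitrary $D_{tr}^*$, lower-bound the count, then take the minimum'' argument written in a slightly different order.
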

\begin{proof}
See Supplementary Material.

\end{proof}

\subsection{Joint Certification}
We derive a better lower bound of the certified accuracy via jointly considering multiple testing examples. 
Our intuition is that, given a group of testing examples and a poisoning size $e$, an attacker may not be able to make a learning algorithm misclassify all the testing examples jointly even if it can make the learning algorithm misclassify each of them individually. In particular, rNN enables such joint certification. It is challenging to perform joint certification for kNN because of the complex interactions between the nearest neighbors of different testing examples (see our proof of Theorem~\ref{nn_certified_theorem_opt} for specific reasons). Next, we first derive a lower bound of $CA(e)$ on  a group of testing examples for rNN. Then, we derive a lower bound of $CA(e)$ on the testing dataset $D_{te}$ via dividing it into groups. Finally, we discuss different strategies to divide the testing dataset into groups, which may lead to different lower bounds of $CA(e)$.

\myparatight{Deriving a lower bound of $CA(e)$ for a group of testing examples} Suppose we have a group of testing examples which have different predicted labels in rNN. Our key intuition is that 
when an attacker can poison $e$ training examples, the attacker can only decrease the total votes for the testing examples' predicted labels by at most $e$ in rNN, as the testing examples' predicted labels are different. We denote by $\mathcal{U}$ a group of testing examples with different predicted labels and by $m$ its size, i.e., $m=|\mathcal{U}|$. The next theorem shows a lower bound of $CA(e)$ on the testing examples in $\mathcal{U}$ for rNN. 
\begin{theorem}
\label{nn_certified_theorem_opt}
Assuming we have a training dataset $D_{tr}$,  the learning algorithm rNN, and a group of $m$ testing examples $\mathcal{U}=\{(\mathbf{x}_i,y_i)\}_{i=1}^{m}$ with different predicted labels. 
$a_i$ and $b_i$ respectively are the two labels with the largest and second largest number of votes among the nearest neighbors $\mathcal{N}(D_{tr},\mathbf{x}_i)$ of $\mathbf{x}_i$ in $D_{tr}$. Moreover, $s_{a_i}$  and $s_{b_i}$ are the number of votes for $a_i$ and $b_i$, respectively. 
 Without loss of generality, we assume the following:
\begin{align}
\label{lossless_condition}
&(s_{a_1}-s_{b_1})\cdot \mathbb{I}(a_1 = y_1) \geq (s_{a_2}-s_{b_2})\cdot \mathbb{I}(a_2 = y_2) \geq 
\cdots \geq (s_{a_m}-s_{b_m})\cdot \mathbb{I}(a_m = y_m). 
\end{align}
Then, the certified accuracy at poisoning size $e$ of rNN for $\mathcal{U}$ has a lower bound ${CA}(e) \geq  \frac{w-1}{|\mathcal{U}|}$, where  $w$ is the solution to the following optimization problem: 
\begin{align}
 & w = \argmin_{w', w'\geq 1} w' \nonumber \\
  \label{optimizationw}
 \text{ s.t. } \sum_{i=w'}^{m}& \max(s_{a_i} - s_{b_i} - e+ \mathbb{I}(a_i > b_i),0)  \cdot \mathbb{I}(a_{i}=y_{i}) \leq e .
\end{align}
\end{theorem}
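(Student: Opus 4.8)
The plan is to fix an arbitrary poisoned training dataset $D^{*}_{tr}$ with $S(D_{tr},D^{*}_{tr})\le e$, let $F=\{\,i : a_i=y_i \text{ and } \mathcal{M}(D^{*}_{tr},\mathbf{x}_i)\ne y_i\,\}$ be the set of testing examples in $\mathcal{U}$ that were correctly classified before the attack but are misclassified after it, and show that the number still classified correctly, $|\{i:a_i=y_i\}|-|F|$, is at least $w-1$. Since this holds for every admissible $D^{*}_{tr}$, it yields ${CA}(e)\ge (w-1)/m=(w-1)/|\mathcal{U}|$. For a fixed $\mathbf{x}_i$, write $s^{*}_{\ell}$ for the number of votes for label $\ell$ among $\mathcal{N}(D^{*}_{tr},\mathbf{x}_i)$, and let $r_i$ be the number of poisoning operations that delete a training example with label $a_i$ lying in the radius-$r$ ball of $\mathbf{x}_i$ (a modification is counted both as a deletion and as an insertion).

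The first step is to exploit \emph{ball-locality} of rNN: the neighbor set of $\mathbf{x}_i$ is exactly the training points inside its radius-$r$ ball, so deleting a training point can only decrease $s^{*}_{\ell}$ for that point's own label $\ell$, and inserting a training point can only increase $s^{*}_{\ell}$ for the inserted label. Hence $s^{*}_{a_i}\ge s_{a_i}-r_i$, while for every $\ell\ne a_i$ we have $s^{*}_{\ell}\le s_{\ell}+(\text{inserted points inside the ball of }\mathbf{x}_i)\le s_{\ell}+e\le s_{b_i}+e$, since at most $e$ training points are inserted overall. This is exactly the property that fails for kNN: deleting a neighbor of $\mathbf{x}_i$ forces a new, farther point into its $k$-neighborhood whose label can be any $a_j$, so one operation can perturb several testing examples' vote totals in coupled ways; this is why joint certification is carried out only for rNN.

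Next I would show flipping a correctly-classified example is expensive. If $i\in F$ then $y_i=a_i$ is no longer the (label-tie-broken) $\argmax$ at $\mathbf{x}_i$, so some $\ell\ne a_i$ satisfies $s^{*}_{\ell}\ge s^{*}_{a_i}+\mathbb{I}(a_i>\ell)$. Chaining with the previous bounds gives $s_{a_i}-r_i\le s_{\ell}+e-\mathbb{I}(a_i>\ell)$, i.e. $r_i\ge s_{a_i}-s_{\ell}+\mathbb{I}(a_i>\ell)-e$. A short case check on whether $s_{\ell}<s_{b_i}$ or $s_{\ell}=s_{b_i}$, using that $b_i$ is the top-ranked label attaining the second-largest vote count, shows $s_{a_i}-s_{\ell}+\mathbb{I}(a_i>\ell)\ge s_{a_i}-s_{b_i}+\mathbb{I}(a_i>b_i)$ for every $\ell\ne a_i$; together with $r_i\ge 0$ this gives $r_i\ge \max\bigl(s_{a_i}-s_{b_i}+\mathbb{I}(a_i>b_i)-e,\,0\bigr)=:g_i$. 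This mirrors the tie-breaking bookkeeping of Theorem~\ref{nn_certified_theorem}. The budget constraint now comes from the hypothesis that $a_1,\dots,a_m$ are pairwise distinct: each poisoning operation deletes at most one training example, whose label equals $a_i$ for at most one index $i$, so each operation is counted in $r_i$ for at most one $i\in F$; hence $\sum_{i\in F}r_i\le S(D_{tr},D^{*}_{tr})\le e$ and therefore $\sum_{i\in F}g_i\le e$ (note $g_i$ equals the summand $\max(s_{a_i}-s_{b_i}+\mathbb{I}(a_i>b_i)-e,0)\cdot\mathbb{I}(a_i=y_i)$ of~\eqref{optimizationw} for $i\in F$).

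The proof then closes with a short combinatorial argument. Order the examples as in~\eqref{lossless_condition}; then (after breaking ties in the ordering appropriately) the values $g_i\cdot\mathbb{I}(a_i=y_i)$ are non-increasing, the indices with $g_i\cdot\mathbb{I}(a_i=y_i)>0$ form a prefix $\{1,\dots,p\}$, and all of those indices satisfy $a_i=y_i$. By the definition of $w$ we have $\sum_{i\ge w}g_i\mathbb{I}(a_i=y_i)\le e<\sum_{i\ge w-1}g_i\mathbb{I}(a_i=y_i)$ (the case $w=1$ being vacuous), so a greedy exchange argument shows any index set $G$ with $\sum_{i\in G}g_i\mathbb{I}(a_i=y_i)\le e$ contains at most $p-w+1$ of the prefix indices. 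Applying this with $G=F$ and using that all prefix indices are classified correctly, the number still classified correctly is $|\{i:a_i=y_i\}|-|F|\ge p-(p-w+1)=w-1$, as claimed. The main obstacle is the first half: isolating the ball-locality property that lets a single poisoned example lower only one of the $m$ relevant vote totals (and seeing why it fails for kNN), together with the exact tie-breaking that produces the $\mathbb{I}(a_i>b_i)$ term and the $\max(\cdot,0)$ truncation; the budget bound and the final optimization are then routine.
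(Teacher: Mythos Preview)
Your proof is correct and follows essentially the same approach as the paper's: both use the ball-locality of rNN to obtain $s^{*}_{a_i}\ge s_{a_i}-r_i$ and $s^{*}_{\ell}\le s_{b_i}+e$, derive the per-example flipping cost $g_i$, exploit the pairwise distinctness of the $a_i$ to get the budget constraint $\sum_i r_i\le e$, and finish via the non-increasing ordering of the $g_i$. Your write-up is in fact more careful than the paper's in two spots---the case analysis showing the worst competing label is effectively $b_i$ (with the $\mathbb{I}(a_i>b_i)$ term), and the prefix/exchange argument that converts the budget bound into the count $w-1$---whereas the paper handles these steps informally.
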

\begin{proof}
When an attacker can poison at most $e$ training examples, the attacker can add at most $e$ new nearest neighbors and remove $e$ existing ones in $\mathcal{N}(D_{tr},\mathbf{x}_i)$ (equivalent to modifying $e$ training examples) in the worst-case scenario. 
We denote by $s^{*}_{a_i}$ and $s^{*}_{b_i}$ respectively the number of votes for labels $a_i$ and $b_i$ among the nearest neighbors $\mathcal{N}(D_{tr}^*,\mathbf{x}_i)$. 
First, we have $s^{*}_{b_i} \leq s_{b_i} + e$ for $\forall i \in \{1,2,\cdots,m\}$ since at most $e$ new nearest neighbors are added. Second, we have $s^{*}_{a_i} \geq s_{a_i} - e_i$ in rNN, where $e_i$ is the number of removed nearest neighbors in $\mathcal{N}(D_{tr},\mathbf{x}_i)$ whose true labels are $a_i$. Note that kNN does not support joint certification because $s^{*}_{a_i} \geq s_{a_i} - e_i$ does not hold for kNN.  

Next, we derive the minimal value of $e_i$ such that rNN misclassifies $\mathbf{x}_i$.  
In particular, we consider two cases. If $a_i \neq y_i$, i.e.,  $\mathbf{x}_i$ is misclassified by rNN without attack, then we have $e_i = 0$. If $a_i = y_i$,  $\mathbf{x}_i$ is misclassified by rNN when $s^{*}_{a_i} \leq s^{*}_{b_i}$ if $a_i < b_i$ and $s^{*}_{a_i} < s^{*}_{b_i}$ if $a_i > b_i$ after attack, which means $e_i \geq s_{a_i} - s_{b_i} - e + \mathbb{I}(a_i > b_i)$. Since $e_i \geq 0$, we have $e_i \geq \max (s_{a_i} - s_{b_i} - e+ \mathbb{I}(a_i > b_i),0)$. Combining the two cases, we have the following lower bound for $e_i$  that makes rNN misclassify $\mathbf{x}_i$: $e_i \geq \max (s_{a_i} - s_{b_i} - e+ \mathbb{I}(a_i > b_i), 0) \cdot \mathbb{I}(a_i = y_i)$. 
Moreover, since the attacker can remove at most $e$ training examples and the group of testing examples have different predicted labels, i.e., $a_i\neq a_j$ $\forall i,j \in \{1,2,\cdots,m\}$ and $i\neq j$, we have $\sum_{i=1}^{m} e_i \leq e$. We note that the lower bound of $e_i$ is non-increasing as $i$ increases based on Equation~(\ref{lossless_condition}). Therefore, in the worst-case scenario, the attacker can make rNN misclassify the last $m-w+1$ testing inputs whose corresponding $e_i$ sum to be at most $e$. Formally, $w$ is the solution to the optimization problem in Equation~(\ref{optimizationw}). Therefore, the certified accuracy at poisoning size $e$ is at least $\frac{w-1}{|\mathcal{U}|}$.
\end{proof}

\myparatight{Deriving a lower bound of $CA(e)$ for a testing dataset} Based on Theorem~\ref{nn_certified_theorem_opt}, we can derive a lower bound of $CA(e)$ for a testing dataset via dividing it into disjoint groups, each of which includes testing examples with different predicted labels in rNN. Formally, we have the following theorem:
\begin{theorem}[Joint Certification]
\label{joint_dataset_aggregate_opt}
Given a testing dataset $D_{te}$, we divide it into $\lambda$ disjoint groups, i.e., $\mathcal{U}_1, \mathcal{U}_2, \cdots, \mathcal{U}_{\lambda}$, where the testing examples in each group have different predicted labels in rNN. Then, we have the following lower bound of $CA(e)$: 
\begin{align}
\label{aggregate_certified_accuracy}
    CA(e) \geq \frac{\sum_{j=1}^{\lambda}\mu_j \cdot |\mathcal{U}_j|}{\sum_{j=1}^{\lambda}|\mathcal{U}_j|}, 
\end{align}
where $\mu_j$ is the lower bound of the certified accuracy at poisoning size $e$ on group $\mathcal{U}_j$, which we can obtain by  invoking Theorem~\ref{nn_certified_theorem_opt}. 
\end{theorem}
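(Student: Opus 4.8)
The plan is to reduce the dataset-level bound to a sum of the group-level bounds already established in Theorem~\ref{nn_certified_theorem_opt}, using that the groups form a partition of $D_{te}$ and that each $\mu_j$ is a \emph{worst-case} guarantee over all admissible poisoned datasets. First I would unroll the definition of $CA(e)$: it is the minimum, over all $D_{tr}^*$ with $S(D_{tr},D_{tr}^*)\le e$, of $\frac{1}{|D_{te}|}\sum_{(\mathbf{x}_i,y_i)\in D_{te}}\mathbb{I}(\mathcal{M}(D_{tr}^*,\mathbf{x}_i)=y_i)$. Hence it suffices to show that for every fixed admissible $D_{tr}^*$ the corresponding accuracy is at least the claimed right-hand side, and then take the minimum over $D_{tr}^*$.

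Second, I would fix such a $D_{tr}^*$ and split the numerator along the partition: since $\mathcal{U}_1,\dots,\mathcal{U}_\lambda$ are disjoint and exhaust $D_{te}$, we have $\sum_{(\mathbf{x}_i,y_i)\in D_{te}}\mathbb{I}(\mathcal{M}(D_{tr}^*,\mathbf{x}_i)=y_i)=\sum_{j=1}^{\lambda}\sum_{(\mathbf{x}_i,y_i)\in\mathcal{U}_j}\mathbb{I}(\mathcal{M}(D_{tr}^*,\mathbf{x}_i)=y_i)$ and $|D_{te}|=\sum_{j=1}^{\lambda}|\mathcal{U}_j|$. Third, for each group $\mathcal{U}_j$ I would invoke Theorem~\ref{nn_certified_theorem_opt}: because that theorem guarantees a certified accuracy of at least $\mu_j$ on $\mathcal{U}_j$ against \emph{every} poisoned training dataset of poisoning size at most $e$, and our fixed $D_{tr}^*$ is one such dataset, we obtain $\sum_{(\mathbf{x}_i,y_i)\in\mathcal{U}_j}\mathbb{I}(\mathcal{M}(D_{tr}^*,\mathbf{x}_i)=y_i)\ge\mu_j\,|\mathcal{U}_j|$ (here I am also using that the hypothesis ``testing examples in each group have different predicted labels in rNN'' is exactly what licenses applying Theorem~\ref{nn_certified_theorem_opt} to $\mathcal{U}_j$). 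Summing over $j$, dividing by $|D_{te}|=\sum_j|\mathcal{U}_j|$, and then taking the minimum over admissible $D_{tr}^*$ gives the claimed bound.

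The one place that needs care — and the only real content beyond bookkeeping — is applying Theorem~\ref{nn_certified_theorem_opt} to all $\lambda$ groups simultaneously with the \emph{same} budget $e$, rather than partitioning $e$ among the groups. A priori one might worry this double-counts the attacker's power. The resolution is that Theorem~\ref{nn_certified_theorem_opt} is not a statement about a budget dedicated to a single group; it already quantifies over the worst possible way the full budget $e$ could be allocated against that particular group, so instantiating it at the concrete $D_{tr}^*$ (whose total poisoning size is $\le e$) is legitimate for each group in turn, and nothing about compatibility between groups is needed since the argument is entirely per-group after the partition step. This also explains why the bound need not be tight — the attacker cannot in general realize the per-group worst case for all groups at once — but it is a valid lower bound, which is all that Theorem~\ref{joint_dataset_aggregate_opt} asserts.
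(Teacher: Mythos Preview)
Your proposal is correct and follows essentially the same argument as the paper: unroll the definition of $CA(e)$, split the sum along the partition $\mathcal{U}_1,\dots,\mathcal{U}_\lambda$, and apply Theorem~\ref{nn_certified_theorem_opt} to each group with the full budget $e$. The only cosmetic difference is that the paper writes the key step as the inequality $\min_{D_{tr}^*}\sum_j(\cdot)\geq\sum_j\min_{D_{tr}^*}(\cdot)$, whereas you fix an arbitrary admissible $D_{tr}^*$ first and bound each group's count before taking the minimum; these are logically equivalent, and your explicit remark about why the same $e$ may be reused for every group (rather than partitioned) makes the reasoning more transparent.
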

\begin{proof}
See Supplementary Material.
\end{proof}

\noindent
{\bf Strategies of grouping testing examples:} Our Theorem~\ref{joint_dataset_aggregate_opt} is applicable to any way of dividing the testing examples in $D_{te}$ to disjoint groups once the testing examples in each group have different predicted labels in rNN. Therefore, a natural question is how to group the testing examples in $D_{te}$ to maximize our lower bound of certified accuracy.
For instance, a naive method is to randomly divide the testing examples into disjoint groups, each of which includes at most $c$ (the number of classes) testing examples with different predicted labels. We call such method \emph{Random Division (RD)}. However, RD achieves suboptimal performance because it does not consider the certified robustness of each individual testing example. In particular, some testing examples can or cannot be certifiably correctly classified no matter which groups they belong to. However, if we group them with other testing examples,  the certified accuracy may be degraded because each group can have at most $c$ testing examples. For instance, if a testing example cannot be certifiably correctly classified no matter which group it belongs to, then adding it to a group would exclude another testing example from the group, which may degrade the certified accuracy for the group. 

Therefore, we propose to isolate these testing examples and divide the remaining testing examples into disjoint groups. We call such method \emph{\underline{Is}o\underline{la}tion \underline{a}nd \underline{D}ivision (ISLAND)}.  
Specifically, we first divide the testing dataset $D_{te}$ into three disjoint parts which we denote as $D_{te}^{0}$, $D_{te}^{1}$, and $D_{te}^{2}$. 
$D_{te}^{0}$ contains the testing examples that cannot be certifiably correctly classified at poisoning size $e$ no matter which group they belong to. Based on our proof of Theorem~\ref{nn_certified_theorem_opt}, a testing example $(\mathbf{x}_i, y_i)$ that satisfies $(s_{a_i} - s_{b_i} - e + \mathbb{I}(a_i > b_i))\cdot \mathbb{I}(a_{i}=y_{i}) \leq 0$ cannot be certifiably correctly classified at poisoning size $e$ no matter which group it belongs to. Therefore,  $D_{te}^{0}$ includes such testing examples. Moreover, based on Theorem~\ref{nn_certified_theorem}, a testing example $(\mathbf{x}_i, y_i)$ that satisfies $e \leq \lceil \frac{s_{a_i} - s_{b_i} + \mathbb{I}(a_i > b_i)}{2} \rceil - 1$ can be certifiably correctly classified at poisoning size $e$. Therefore, $D_{te}^{1}$ includes such testing examples. Each testing example in $D_{te}^{0}$ or $D_{te}^{1}$ forms a group by itself.  $D_{te}^{2}$ includes the remaining testing examples, which we further divide into groups. Our method of dividing  $D_{te}^{2}$ into groups is inspired by the proof of Theorem~\ref{nn_certified_theorem_opt}. 
In particular,  
we form a group of testing examples as follows: for each label $l \in \{1,2,\cdots,c\}$, we find the testing example that has the largest value of $(s_{a_i} - s_{b_i}-e+ \mathbb{I}(a_i > b_i))\cdot \mathbb{I}(a_i = l)$ and we skip the label if there is no remaining testing example whose predicted label is $l$.  
We apply the procedure to recursively group the testing examples in $D_{te}^{2}$ until no testing examples are left.

\section{Evaluation}
\noindent
{\bf Datasets:} We evaluate our methods on MNIST and CIFAR10. 
We use the popular histogram of oriented gradients (HOG)~\citep{dalal2005histograms} method (we adopt public implementation~\citep{hog-code}) to extract features for each example, which we found improves certified accuracy. Note that previous work~\citep{jia2020intrinsic} used a pre-trained model to extract features via transfer learning. However, the pre-trained model may also be poisoned and thus we don't use it. For simplicity, we rank the training examples in a dataset using their indices and use them to break ties in determining nearest neighbors for kNN. Moreover, we rank the labels as $\{1,2,\cdots,10\}$ to break ties for labels. Following previous work~\citep{wang2019neural}, we adopt a white square located at the bottom right corner of an image as the trigger in backdoor attacks for both MNIST and CIFAR10. The sizes of the triggers are $5\times 5$ and $10\times 10$ for MNIST and CIFAR10, considering different image sizes in those two datasets.

\begin{figure}[!t]
	 \centering
\subfloat[MNIST]{\includegraphics[width=0.45\textwidth]{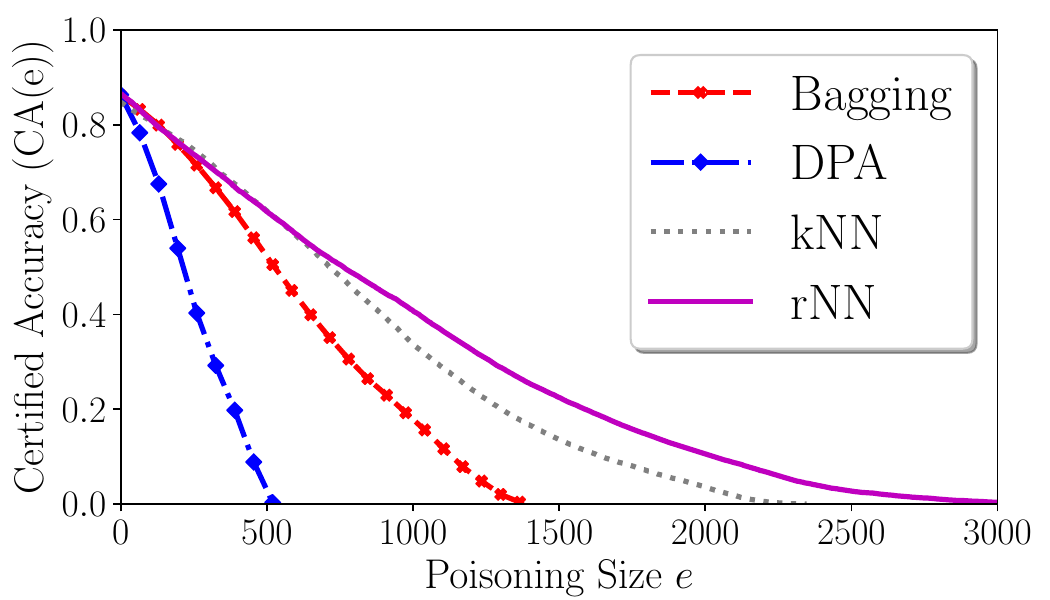}\label{compare_mnist}}
\subfloat[CIFAR10]{\includegraphics[width=0.45\textwidth]{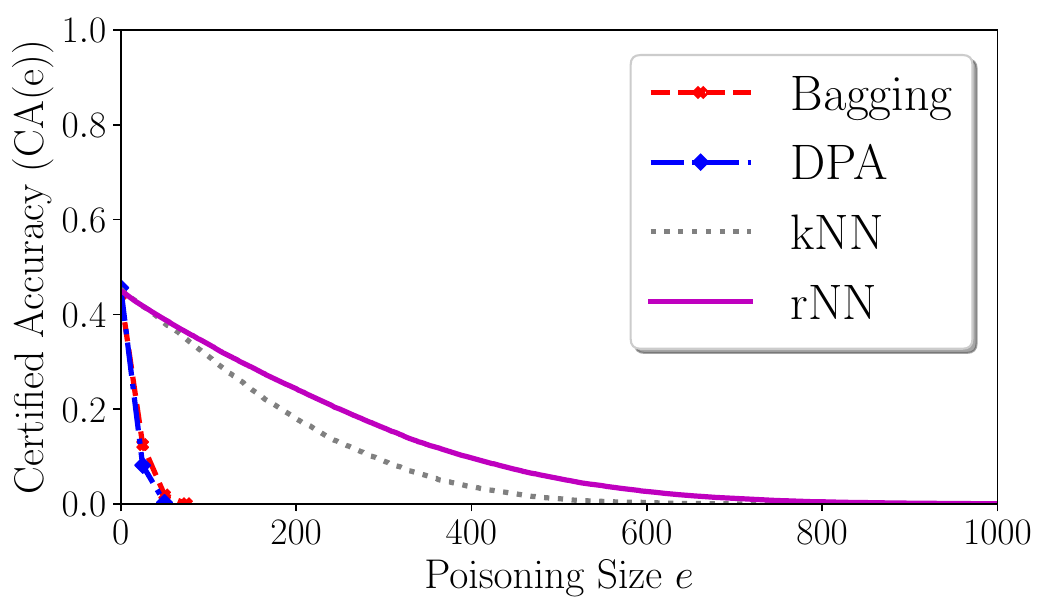}\label{compare_cifar10}}
	 \caption{Comparing kNN and rNN with state-of-the-art methods against data poisoning attacks. }
	 \label{compare_hog_feature_extractor_dp}
	 \vspace{-4mm}
\end{figure}

\begin{figure}[!t]
	 \centering
\subfloat[MNIST]{\includegraphics[width=0.45\textwidth]{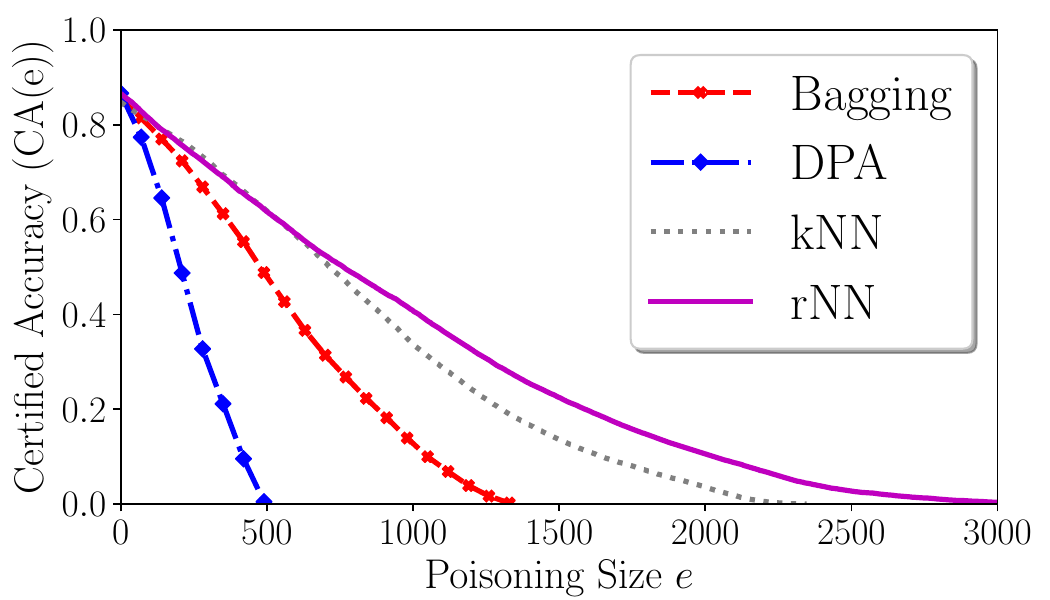}\label{compare_mnist}}
\subfloat[CIFAR10]{\includegraphics[width=0.45\textwidth]{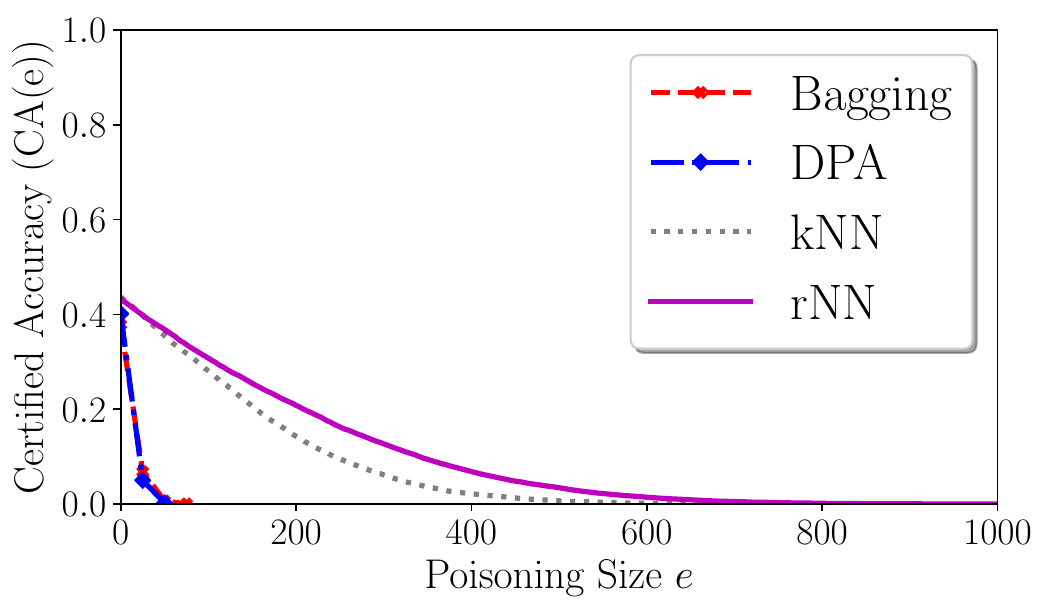}\label{compare_cifar10}}
	 \caption{Comparing kNN and rNN with state-of-the-art methods against backdoor attacks. }
	 \label{compare_hog_feature_extractor_ba}
	 \vspace{-4mm}
\end{figure}

\noindent
{\bf Parameter settings:} While any distance metric is applicable, we use $\ell_1$ in our experiments for both kNN and rNN. Unless otherwise mentioned, we adopt the following settings: $k=5,000$ for both MNIST and CIFAR10 in kNN; and $r=4$ for MNIST and $r=20$ for CIFAR10 in rNN, considering the different feature dimensions of MNIST and CIFAR10. By default, we use the ISLAND grouping method in the joint certification for rNN.

\noindent
{\bf Comparing with  bagging~\citep{jia2020intrinsic} and DPA~\citep{levine2020deep}:} Figure~\ref{compare_hog_feature_extractor_dp} and~\ref{compare_hog_feature_extractor_ba} show the comparison results of bagging, DPA, kNN, and rNN for data poisoning attacks and backdoor attacks, respectively. Bagging learns $N$ base classifiers, each of which is learnt on a random subsample with $\xi$ training examples of the training dataset. Moreover, bagging's certified accuracy is correct with a confidence level $1-\alpha$. DPA divides a training dataset into $\zeta$ disjoint partitions and learns a base classifier on each of them. Then, DPA takes a majority vote among the base classifiers to predict the label of a testing example.  All the compared methods have tradeoffs between accuracy under no attacks (i.e., $CA(0)$) and robustness against attacks. Therefore, we set their parameters such that they have similar accuracy under no attacks (i.e., similar $CA(0)$). In particular, we use the default $k$ for kNN, and we adjust $r$ for rNN, $\xi$ for bagging, and $\zeta$ for DPA. The searched parameters are as follows: $r=4$, $\xi =20$, and $\zeta = 2,500$ for MNIST; and $r=21$, $\xi = 300$, and $\zeta = 500$ for CIFAR10. We find that the searched parameters are the same for data poisoning attacks and backdoor attacks. Note that we set $N=1,000$ and $\alpha =0.001$ for bagging following \citet{jia2020intrinsic}.

\begin{figure}[!t]
	 \centering
\subfloat[MNIST]{\includegraphics[width=0.45\textwidth]{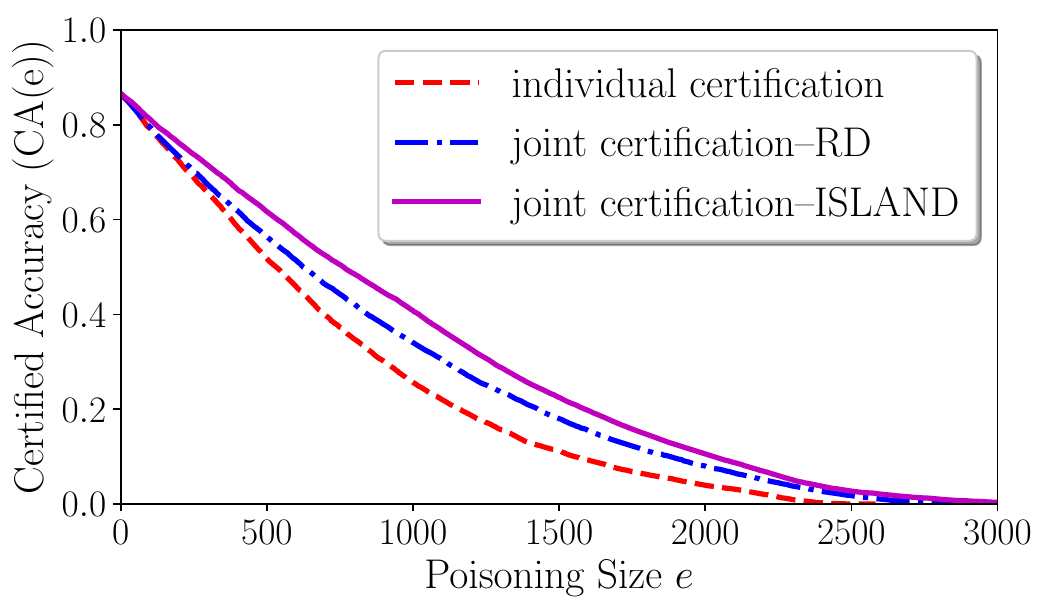}\label{individual_joint_mnist}}
\subfloat[CIFAR10]{\includegraphics[width=0.45\textwidth]{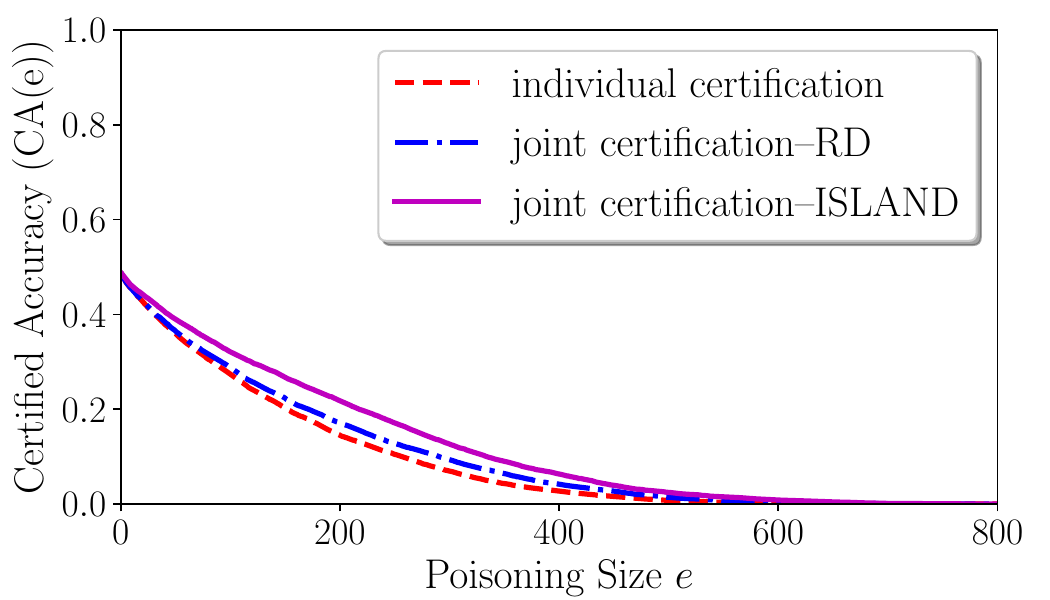}\label{individual_joint_cifar10}}
	 \caption{Comparing individual certification with joint certification for rNN against data poisoning attacks.}
	 \label{compare_hog_feature_extractor_joint}
	 \vspace{-4mm}
\end{figure}

We have the following observations. First, both kNN and rNN outperform bagging and DPA. The superior performance of kNN and rNN stems from two reasons: 1) each poisoned training example corrupts multiple voters for bagging and DPA, while it only corrupts one voter for kNN and rNN,  which means that, given the same gap between the largest and second largest number of votes, kNN and rNN can tolerate more poisoned training examples; and 2) rNN enables joint certification that improves the certified accuracy. Second,  rNN achieves better certified accuracy than kNN when the poisoning size is large.  The reason is that rNN supports joint certification. Third, kNN (or rNN) achieves similar certified accuracy against data poisoning attacks and backdoor attacks for a given poisoning size. This is because, in backdoor attacks, adding a trigger to a testing image does not affect its label predicted by a clean classifier. 

\noindent
{\bf Comparing individual certification with joint certification:} Figure~\ref{compare_hog_feature_extractor_joint} and~\ref{compare_hog_feature_extractor_joint_ba} (in Supplementary Material) compare individual certification and joint certification (with the RD and ISLAND grouping methods) for rNN against data poisoning attacks and backdoor attacks. Our empirical results validate that joint certification improves the certified accuracy upon individual certification. Moreover, our ISLAND grouping method outperforms the RD method. 

\noindent
{\bf Impact of $k$ and $r$:} Figure~\ref{impact_of_k_knn},~\ref{impact_of_r_rnn},~\ref{impact_of_k_knn_ba} (in Supplementary Material),~\ref{impact_of_r_rnn_ba} (in Supplementary Material) show the impact of $k$ and $r$ on the certified accuracy of kNN and rNN against data poisoning attacks and backdoor attacks, respectively. 
As the results show, 
$k$ and $r$ achieve tradeoffs between accuracy under no attacks (i.e., $CA(0)$) and robustness. Specifically, 
when $k$ or $r$ is smaller, the accuracy under no attacks, i.e., $CA(0)$, is larger, but the certified accuracy decreases more quickly as the poisoning size $e$ increases.

\begin{figure}[!t]
	 \centering
\subfloat[MNIST]{\includegraphics[width=0.45\textwidth]{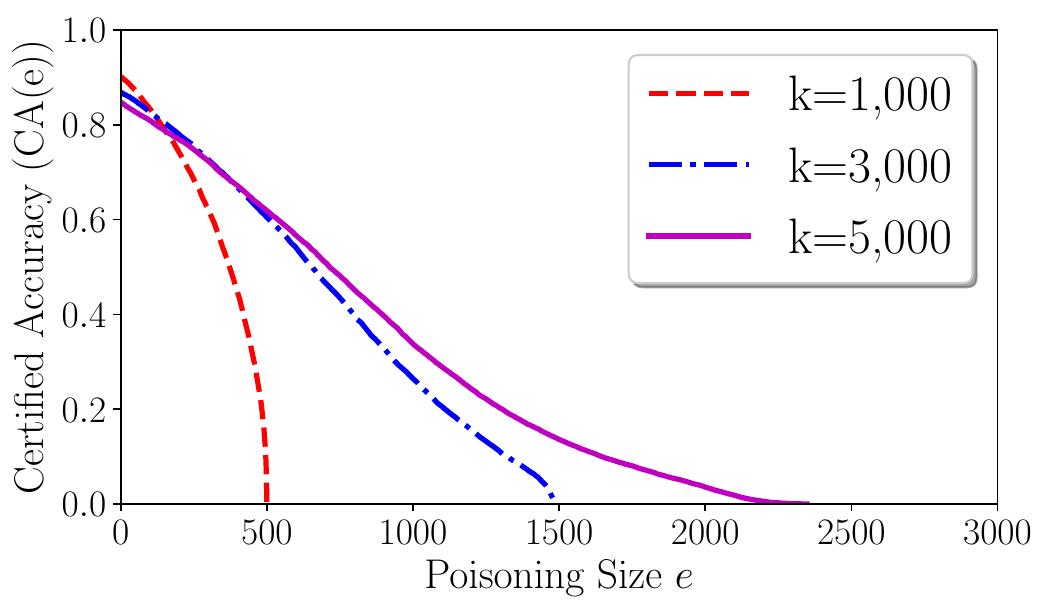}\label{impact_of_k_mnist}}
\subfloat[CIFAR10]{\includegraphics[width=0.45\textwidth]{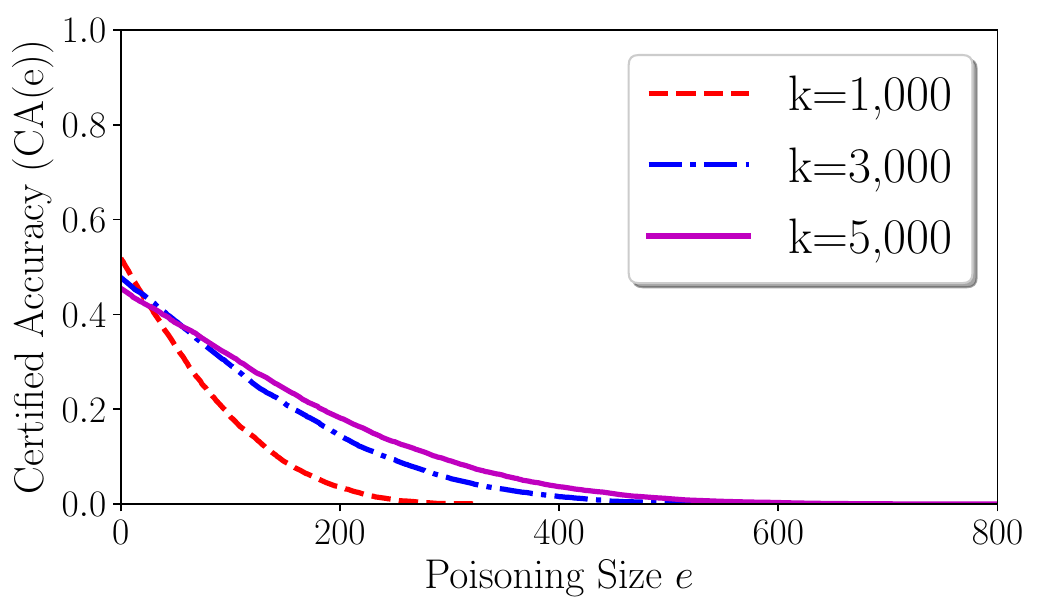}\label{impact_of_k_cifar10}}
	 \caption{Impact of $k$ on the certified accuracy of kNN against data poisoning attacks.} 
	 \label{impact_of_k_knn}
	 \vspace{-4mm}
\end{figure}

\begin{figure}[!t]
	 \centering
\subfloat[MNIST]{\includegraphics[width=0.45\textwidth]{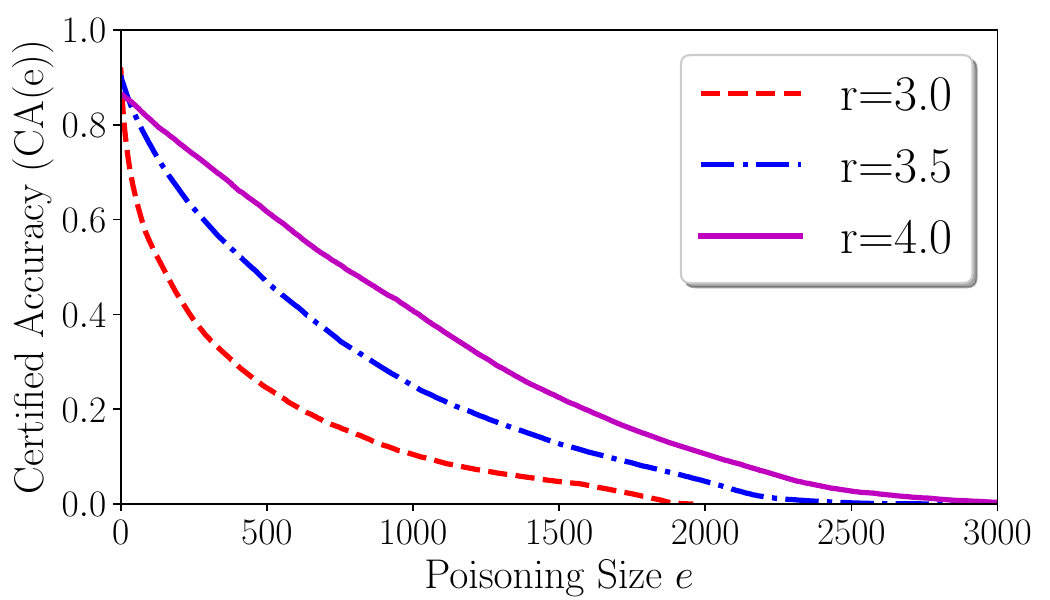}\label{impact_of_r_mnist}}
\subfloat[CIFAR10]{\includegraphics[width=0.45\textwidth]{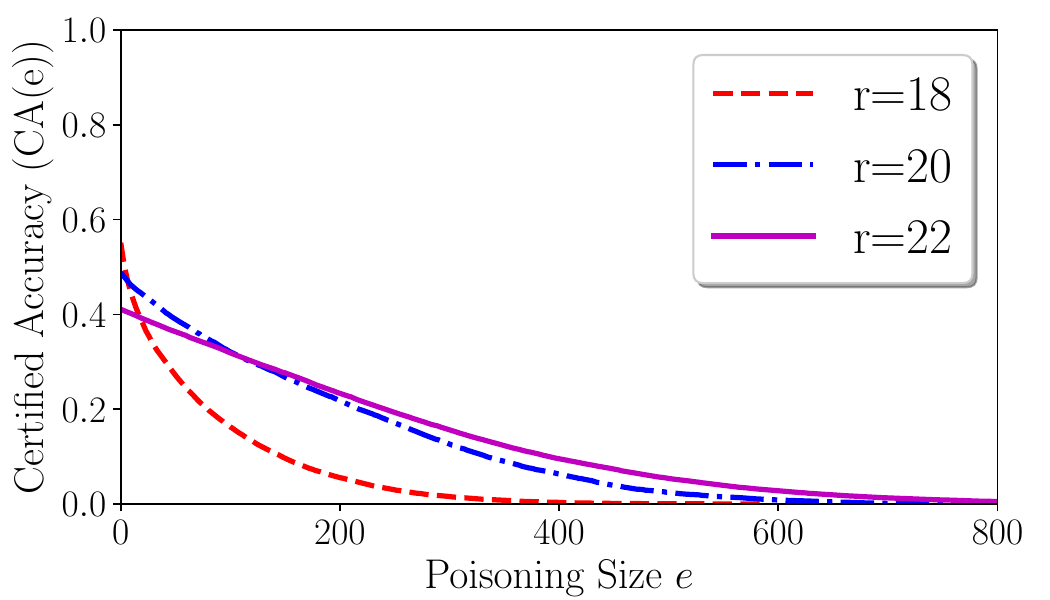}\label{impact_of_r_cifar10}}
	 \caption{Impact of $r$ on the certified accuracy of rNN against data poisoning attacks.} 
	 \label{impact_of_r_rnn}
	 \vspace{-4mm}
\end{figure}

\begin{figure}[!t]
	 \centering
\subfloat[Data poisoning attacks]{\includegraphics[width=0.45\textwidth]{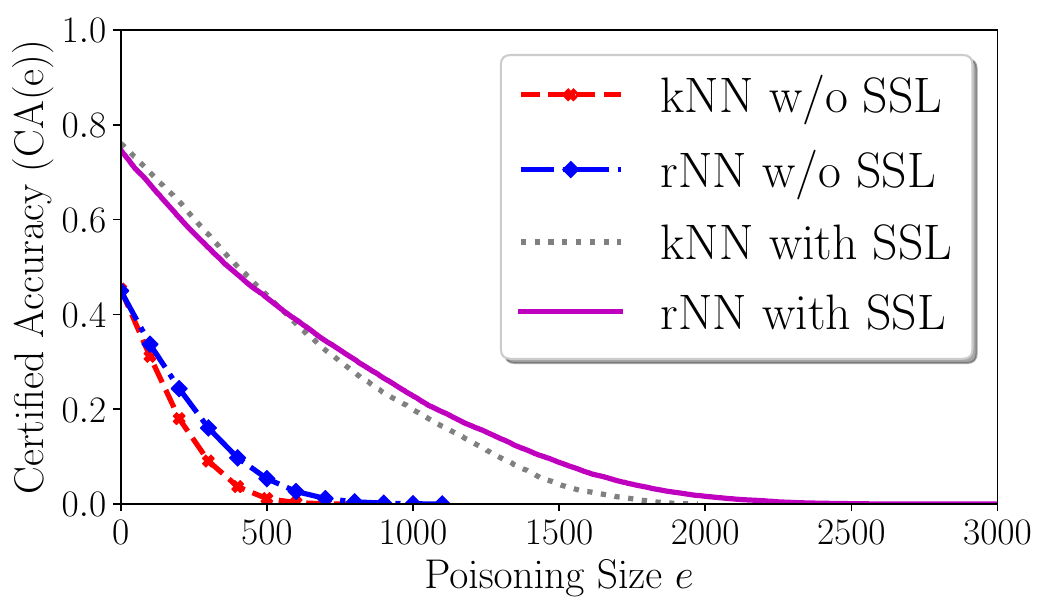}}
\subfloat[Backdoor attacks]{\includegraphics[width=0.45\textwidth]{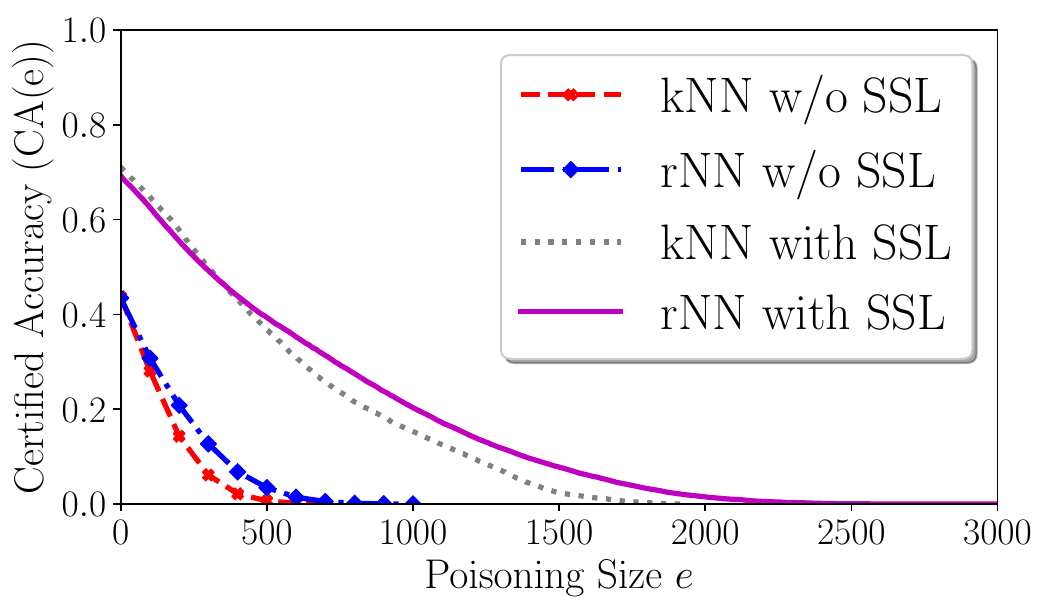}\label{impact_of_k_cifar10}}
	 \caption{Self-supervised learning improves the certified accuracy of kNN and rNN, where CIFAR10 is used.} 
	 \label{impact_of_tl}
	 \vspace{-4mm}
\end{figure}

\noindent
{\bf Self-supervised learning improves the certified accuracy:} 
\emph{Self-supervised learning (SSL)}~\citep{hadsell2006dimensionality,he2020momentum,chen2020simple} aims to learn a feature extractor using a large amount of unlabeled data, such that the feature extractor can be used to extract features for a variety of downstream learning tasks.  We adopt the pre-trained feature extractor called CLIP~\citep{radford2learning} to extract features. CLIP was pre-trained using 400 million (image, text) pairs and we use its public implementation~\citep{clip-code}. Note that we assume the pre-trained feature extractor is not poisoned. 
For each training or testing input in CIFAR10, we use CLIP to extract its features. Then, given the features, we use kNN or rNN to classify testing inputs.
We use the default $k$ (i.e., $k=5,000$) for both kNN without SSL and kNN with SSL. We adjust $r$ such that rNN without SSL (or rNN with SSL) has similar certified accuracy under no attacks with kNN without SSL (or kNN with SSL). Figure~\ref{impact_of_tl} shows the comparison results. Our results show that  self-supervised learning can significantly improve the certified accuracy of kNN and rNN.

\section{Related Work}
\noindent
{\bf Data poisoning attacks and backdoor attacks:} Data poisoning attacks have been proposed against various learning algorithms such as   Bayes classifier~\citep{nelson2008exploiting}, SVM~\citep{biggio2012poisoning},   
neural networks~\citep{munoz2017towards,shafahi2018poison,suciu2018does,demontis2019adversarial,zhu2019transferable,huang2020metapoison}, 
recommender systems~\citep{li2016data,yang2017fake,fang2018poisoning,fang2020influence,huang2021data}, federated learning~\citep{bhagoji2019analyzing,fang2020local,bagdasaryan2020backdoor}, and others~\citep{rubinstein2009antidote,vuurens2011much}. Backdoor attacks~\citep{gu2017badnets,chen2017targeted,liu2017trojaning,jia2021badencoder}  corrupt both the training and testing phases of a machine learning system such that the corrupted classifier predicts an attacker-chosen label for any input embedded with a trigger.

\noindent
{\bf Defenses against data poisoning attacks and backdoor attacks:} To mitigate data poisoning attacks and/or backdoor attacks, many empirical defenses~\citep{cretu2008casting,rubinstein2009antidote,barreno2010security,biggio2011bagging,feng2014robust,jagielski2018manipulating,tran2018spectral,diakonikolas2019sever,liu2018fine,wang2019neural} have been proposed.~\citet{steinhardt2017certified} derived an upper bound of the loss function for data poisoning attacks when the model is learnt using examples in a feasible set.~\citet{diakonikolas2019sever} proposed a robust meta-algorithm which iteratively removes outliers such that the model parameters learnt on the poisoned dataset are close to those learnt on the clean dataset under certain assumptions.   
However, these defenses cannot guarantee that the predicted label for a testing input is certifiably unaffected and cannot provide (rigorous) certified accuracies.

Recently, several certified defenses~\citep{ma2019data,wang2020certifying,rosenfeld2020certified,weber2020rab,levine2020deep,jia2020intrinsic} were proposed to defend against data poisoning attacks and/or backdoor attacks. These defenses provide certified accuracies for a testing dataset either probabilistically~\citep{ma2019data,wang2020certifying,weber2020rab,jia2020intrinsic} or deterministically~\citep{rosenfeld2020certified,levine2020deep}. All these defenses except~\citet{ma2019data} leverage majority vote to predict the label of a testing example. In particular, a voter is a base classifier learnt on a perturbed version of the training dataset in randomized smoothing based defenses~\citep{rosenfeld2020certified,wang2020certifying,weber2020rab}, while a voter is a base classifier learnt on a subset of the trainig dataset in bagging~\citep{jia2020intrinsic} and DPA~\citep{levine2020deep}.~\citet{ma2019data} showed that a differentially private learning algorithm achieves certified accuracy against data poisoning attacks. They also train multiple differentially private classifiers, but they are not used to predict the label of a testing example via majority vote. Instead, their average accuracy is used to estimate the certified accuracy. 

kNN and rNN have intrinsic majority vote mechanisms and we show that they provide deterministic certified accuracies against data poisoning attacks and backdoor attacks. Moreover, rNN enables joint certification. We note that DPA~\citep{levine2020deep} proposed to use a hash function to assign training examples into partitions, which is different from our use of hash function. In particular, we use a hash function to rank training examples. Moreover, both DPA and our work rank the labels to break ties. 

\noindent
{\bf Nearest neighbors and robustness:} A line of works~\citep{wilson1972asymptotic,guyon1996discovering,peri2019deep,bahri2020deep} leveraged nearest neighbors to clean a training dataset. For instance,~\citet{wilson1972asymptotic} proposed to remove a training example whose label is not the same as the majority vote among the labels of its 3 nearest neighbors.~\citet{peri2019deep} proposed to remove a training example whose label is not the mode amongst labels of its k nearest neighbors in the feature space.~\citet{bahri2020deep} combined kNN with an intermediate layer of a preliminary deep neural network model to filter suspiciously-labeled training examples. 
Another line of works~\citep{gao2018consistency,reeve2019fast} studied the resistance of nearest neighbors to random noisy labels. For instance,~\citet{gao2018consistency} analyzed the resistance of kNN to asymmetric label noise and introduced a Robust kNN to deal with noisy labels.~\citet{reeve2019fast} further analyzed the Robust kNN proposed by~\citet{gao2018consistency} in the setting with unknown asymmetric label noise.   

 kNN and its variants have also been used to defend against adversarial examples~\citep{wang2018analyzing,sitawarin2019defending,papernot2018deep,sitawarin2019robustness,dubey2019defense,yang2020robustness,cohen2020detecting}. For instance,~\citet{wang2018analyzing} analyzed the robustness of nearest neighbors to adversarial examples and proposed a more robust 1-nearest neighbor.  Several works~\citep{amsaleg2017vulnerability,wang2018analyzing,wang2019evaluating,yang2020robustness} proposed adversarial examples to nearest neighbors, e.g.,~\citet{wang2019evaluating} proposed adversarial examples against 1-nearest neighbor. These works are orthogonal to ours as we focus on analyzing the certified robustness of kNN and rNN against data poisoning and backdoor attacks. 
\section{Conclusion and Future Work}
In this work, we derive the certified robustness of nearest neighbor algorithms, including kNN and rNN, against data poisoning attacks and backdoor attacks. Moreover, we derive a better lower bound of certified accuracy for rNN via jointly certifying multiple testing examples. 
Our evaluation results  show that 1) both kNN and rNN outperform state-of-the-art certified defenses against data poisoning attacks and backdoor attacks, and 2) joint certification outperforms individual certification.  Interesting future work includes 1) extending joint certification to other learning algorithms, 2) improving joint certification via new grouping methods, and 3) improving certified accuracy of kNN and rNN via new distance metrics.

\section{ Acknowledgments}
We thank the anonymous reviewers for insightful reviews. 
This work was supported by the National Science
Foundation under Grants No. 1937786 and 2112562, as well as the Army Research Office under Grant No. W911NF2110182.

{ 
\bibliographystyle{plainnat}
\bibliography{refs}
}
\appendix

\clearpage

\section{Proof of Theorem~\ref{nn_certified_theorem}}
\label{proof_of_theorem_of_1}
\begin{proof}
When an attacker poisons $S(D_{tr},D_{tr}^*)$ training examples, the number of changed nearest neighbors in $\mathcal{N}(D_{tr},\mathbf{x})$ is at most $S(D_{tr},D_{tr}^*)$. We denote by  $s_{l}^{*} = \sum_{(\mathbf{x}_j, y_j)\in \mathcal{N}(D^{*}_{tr},\mathbf{x})} \mathbb{I}(l = y_j)$ the number of votes for label $l$ among the nearest neighbors $\mathcal{N}(D^{*}_{tr},\mathbf{x})$ in the poisoned training dataset, where $l=1,2,\cdots,c$. Then, we have  $s_l - S(D_{tr},D_{tr}^*) \leq s_{l}^{*}\leq s_l + S(D_{tr},D_{tr}^*)$ for each $l=1,2,\cdots,c$. Therefore, when $S(D_{tr},D_{tr}^*) \leq \lceil \frac{s_a - s_b + \mathbb{I}(a>b)}{2} \rceil - 1$, we have $s_{a}^{*} - s_{b}^{*} \geq s_a - s_b - 2 \cdot S(D_{tr},D_{tr}^*) > 0$ if $a < b$ and $s_{a}^{*} - s_{b}^{*} \geq s_a - s_b - 2 \cdot S(D_{tr},D_{tr}^*) \geq 0$ if $a>b$. Thus, the nearest neighbor algorithm still predicts label $a$ for $\mathbf{x}$ in both cases based on our way of breaking ties, i.e., we have $\mathcal{M}(D_{tr}^*, \mathbf{x})=a$ when $S(D_{tr},D_{tr}^*) \leq \lceil \frac{s_a - s_b + \mathbb{I}(a>b)}{2} \rceil - 1$. 
\end{proof}

\section{Proof of Theorem~\ref{theorem_of_ic_aggregate}}
\label{proof_of_theorem_of_ic_aggregate}
\begin{proof}
We have the following: 
{\small 
\begin{align}
&CA(e) \\
=& \min_{D^{*}_{tr}, S(D_{tr}, D^{*}_{tr}) \leq e} \frac{\sum_{(\mathbf{x}_i,y_i)\in D_{te}}\mathbb{I}(\mathcal{M}(D^{*}_{tr},\mathbf{x}_i)= y_{i})}{|D_{te}|} \\ 
\geq&\frac{\sum_{(\mathbf{x}_i,y_i)\in D_{te}} \min\limits_{D^{*}_{tr}, S(D_{tr}, D^{*}_{tr}) \leq e} \mathbb{I}(\mathcal{M}(D^{*}_{tr},\mathbf{x}_i)= y_{i})}{|D_{te}|} \\
=& \frac{\sum\limits_{(\mathbf{x}_i,y_i)\in D_{te}} \mathbb{I}(a_i=y_i) \min\limits_{D^{*}_{tr}, S(D_{tr}, D^{*}_{tr}) \leq e}  \mathbb{I}(\mathcal{M}(D^{*}_{tr},\mathbf{x}_i)= a_i)}{|D_{te}|} \\
=&\frac{\sum_{(\mathbf{x}_i,y_i)\in D_{te}} \mathbb{I}(a_i=y_i) \mathbb{I}(e \leq e_i^{*})}{|D_{te}|},
\end{align}
}
where the last step is based on applying Theorem~\ref{nn_certified_theorem} to testing input $\mathbf{x}_i$.

\end{proof}

\section{Proof of Theorem~\ref{joint_dataset_aggregate_opt}}
\label{proof_of_aggregate_opt}
\begin{proof}

We have the following: 
\begin{align}
&CA(e) \\
=& \min_{D^{*}_{tr}, S(D_{tr}, D^{*}_{tr})\leq e} \frac{\sum_{(\mathbf{x}_i,y_i)\in D_{te}}\mathbb{I}(\mathcal{M}(D^{*}_{tr},\mathbf{x}_i)= y_{i})}{|D_{te}|} \\ 
=& \min_{D^{*}_{tr}, S(D_{tr}, D^{*}_{tr})\leq e} \frac{\sum_{j=1}^{\lambda} \sum_{(\mathbf{x}_i,y_i)\in \mathcal{U}_j}\mathbb{I}(\mathcal{M}(D^{*}_{tr},\mathbf{x}_i)= y_{i})}{\sum_{j=1}^{\lambda}|\mathcal{U}_j|} \\ 
\label{entire_dataset_proof_step_1}
\geq&\frac{\sum_{j=1}^{\lambda} \min\limits_{D^{*}_{tr}, S(D_{tr}, D^{*}_{tr})\leq e}\sum_{(\mathbf{x}_i,y_i)\in \mathcal{U}_j} \mathbb{I}(\mathcal{M}(D^{*}_{tr},\mathbf{x}_i)= y_{i})}{\sum_{j=1}^{\lambda}|\mathcal{U}_j|} \\
\label{entire_dataset_proof_step_2}
\geq&\frac{\sum_{j=1}^{\lambda} \mu_j \cdot |\mathcal{U}_j|}{\sum_{j=1}^{\lambda}|\mathcal{U}_j|},
\end{align}
where we have Equation~(\ref{entire_dataset_proof_step_2}) from~(\ref{entire_dataset_proof_step_1}) based on applying Theorem~\ref{nn_certified_theorem_opt} to group $\mathcal{U}_j$. 
\end{proof}

\begin{figure}[!t]
	\vspace{-2mm}
	 \centering
\subfloat[MNIST]{\includegraphics[width=0.45\textwidth]{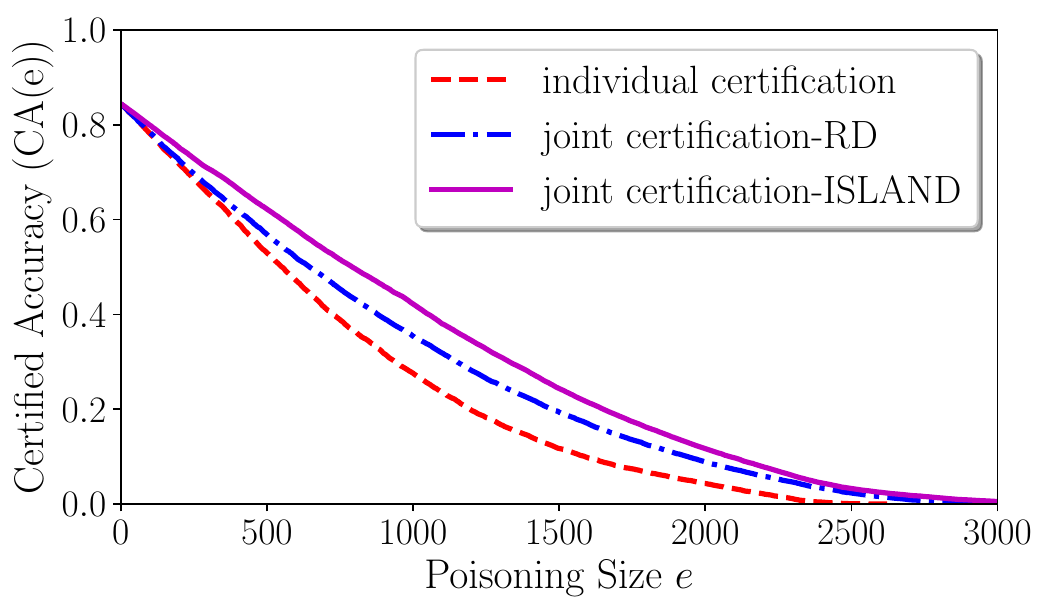}\label{impact_of_group_mnist_ba}}
\subfloat[CIFAR10]{\includegraphics[width=0.45\textwidth]{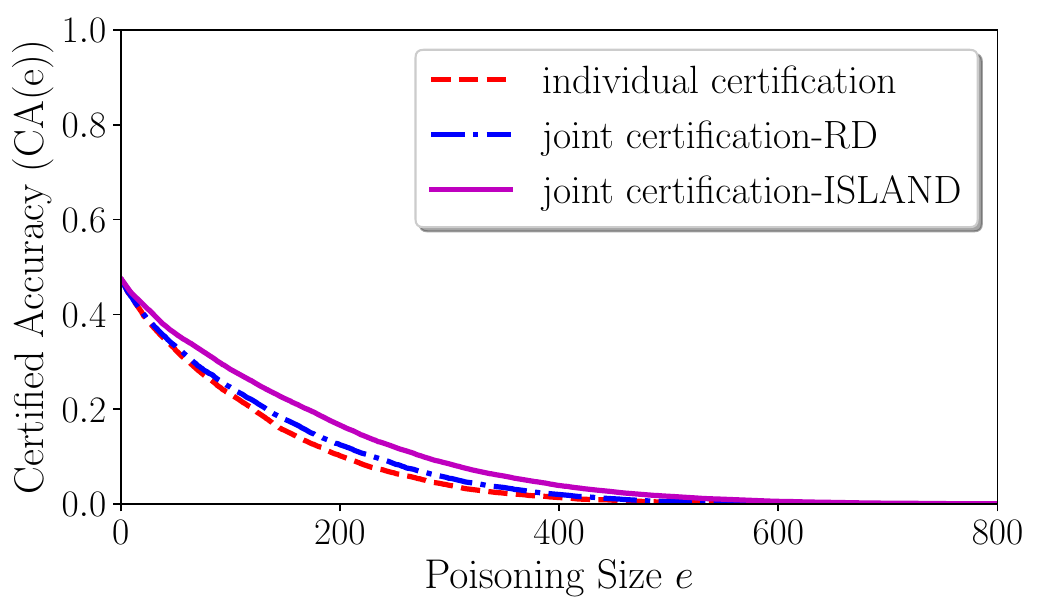}\label{impact_of_group_cifar10_ba}}
	 \caption{Comparing individual certification with joint certification for rNN against backdoor attacks. } 
	 \label{compare_hog_feature_extractor_joint_ba}
\end{figure}

\begin{figure}[!t]
	\vspace{-2mm}
	 \centering
\subfloat[MNIST]{\includegraphics[width=0.45\textwidth]{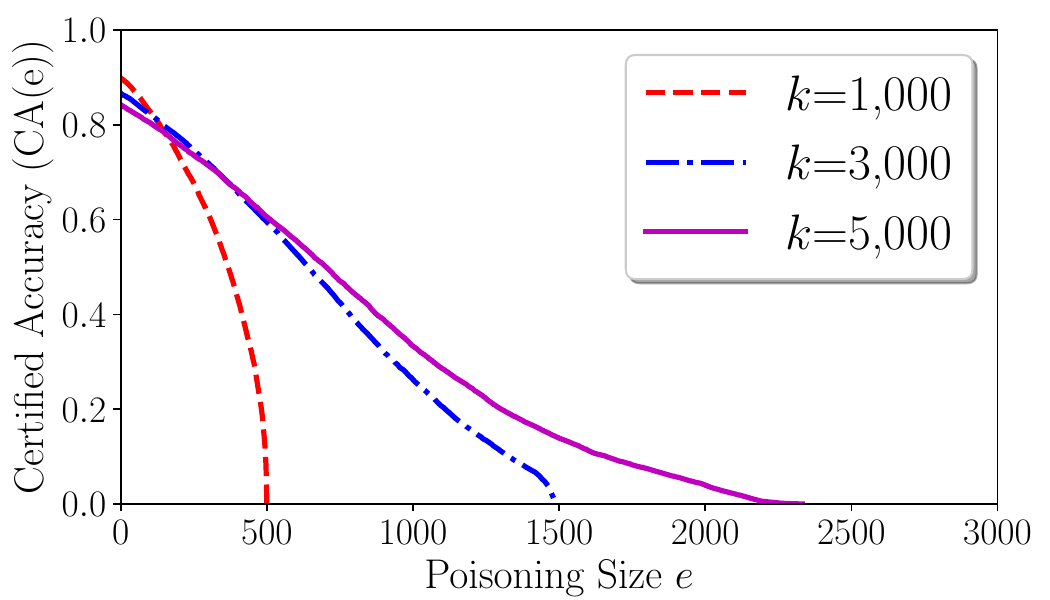}\label{impact_of_k_mnist_ba}}
\subfloat[CIFAR10]{\includegraphics[width=0.45\textwidth]{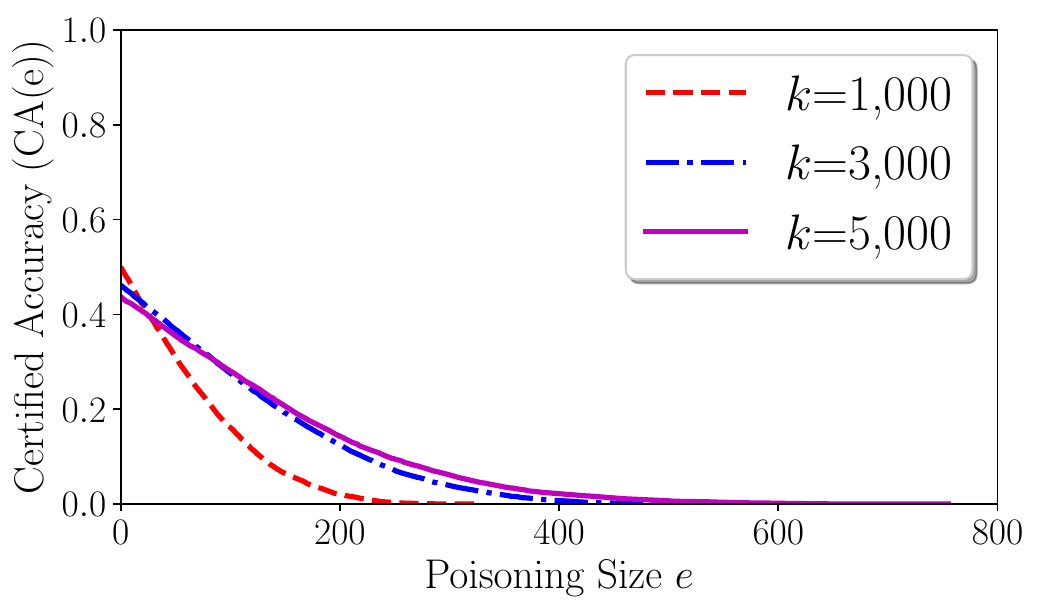}\label{impact_of_k_cifar10_ba}}
	 \caption{Impact of $k$ on the certified accuracy of kNN against backdoor attacks. } 
	 \label{impact_of_k_knn_ba}
\end{figure}

\begin{figure}[!t]
	\vspace{-2mm}
	 \centering
\subfloat[MNIST]{\includegraphics[width=0.45\textwidth]{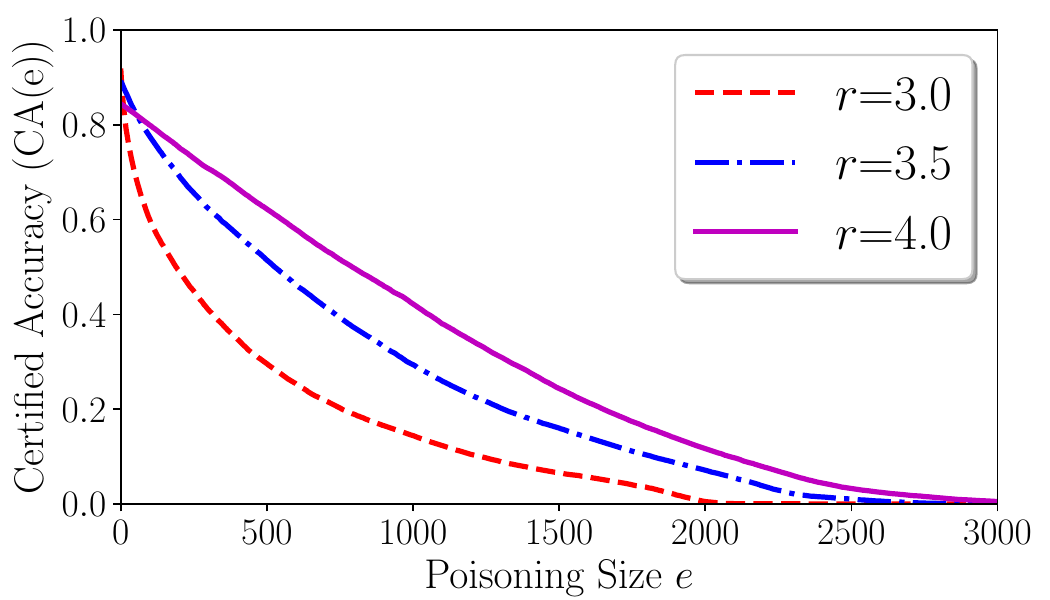}\label{impact_of_r_mnist_ba}}
\subfloat[CIFAR10]{\includegraphics[width=0.45\textwidth]{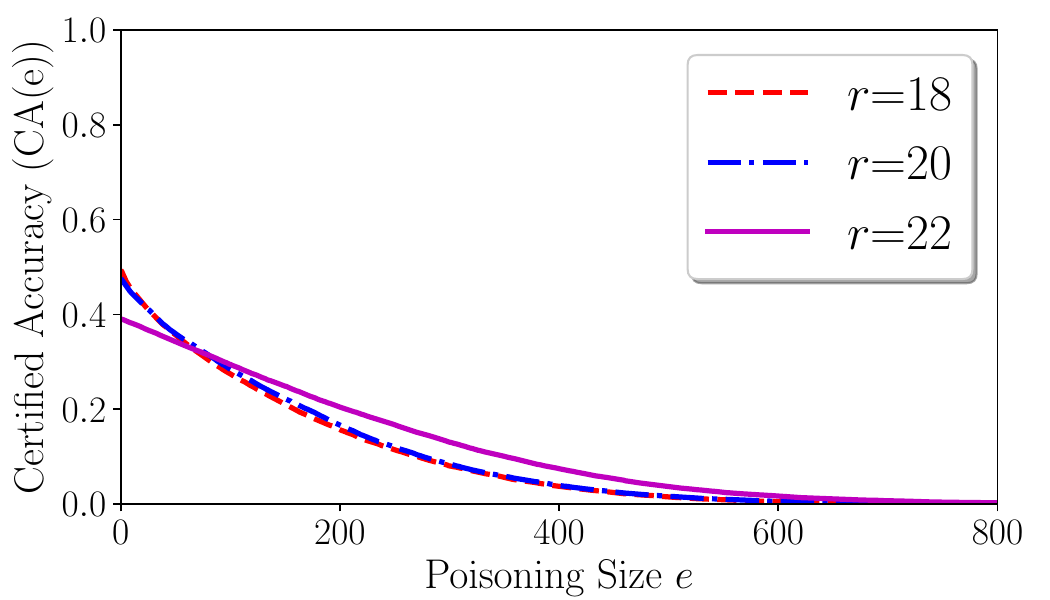}\label{impact_of_r_cifar10_ba}}
	 \caption{Impact of $r$ on the certified accuracy of rNN against backdoor attacks. } 
	 \label{impact_of_r_rnn_ba}
\end{figure}


\end{document}